\newif\ifarxiv

\arxivtrue 

\ifarxiv
\documentclass{article}
\usepackage{arxiv}
\usepackage{amsthm}
\else

\documentclass{article}

\usepackage{icml2023}

\fi

\usepackage{microtype}
\usepackage{graphicx}
\usepackage{subfigure}
\usepackage{booktabs} 

\usepackage{hyperref}


\usepackage{amsmath}
\usepackage{amssymb}
\usepackage{mathtools}
\usepackage{amsthm}

\usepackage[capitalize,noabbrev]{cleveref}


\usepackage[utf8]{inputenc} 
\usepackage[T1]{fontenc}    
\usepackage{hyperref}       
\usepackage{url}            
\usepackage{booktabs}       
\usepackage{amsfonts}       
\usepackage{nicefrac}       
\usepackage{microtype}      
\usepackage{xcolor}         

\usepackage{comment} 
\usepackage{lipsum} 
\usepackage{graphicx}
\usepackage{makecell}
\usepackage{bbm}
\usepackage{epsfig, latexsym, amsfonts, amssymb, graphicx}
\usepackage{soul}
\usepackage{tikz} 
\usepackage{enumitem}
\usetikzlibrary{positioning}
\usepackage{comment}
\usepackage[american]{circuitikz}
\usepackage[algo2e, norelsize, linesnumbered, ruled, lined, boxed, commentsnumbered]{algorithm2e}
\usepackage{times, fancyhdr,xcolor, algorithm,  natbib, eso-pic, forloop, url}

\usepackage[noend]{algpseudocode}
\usetikzlibrary{shapes,arrows}

\DeclareMathOperator*{\argmin}{arg\,min}
\newcommand{\R}{\mathbb{R}}
\newcommand{\x}{\mathbf{x}}
\newcommand{\z}{\mathbf{z}}
\newcommand{\ahat}{\hat{\mathbf{a}}}
\newcommand{\fp}{\hat\x} 
\newcommand{\y}{\mathbf{y}}

\newcommand{\dfdx}{\nabla f(\x)}
\newcommand{\dfdy}{\nabla f(\y)}
\newcommand{\dfdxt}{\nabla f(\xt)}
\newcommand{\dxdt}{\dot{\x}}
\newcommand{\xt}{\x(t)}
\newcommand{\Zeta}{Z}
\newcommand{\Ab}{\mathbf{A}}

\usepackage{accents}

\usepackage[export]{adjustbox}
\algdef{SE}[DOWHILE]{Do}{doWhile}{\algorithmicdo}[1]{\algorithmicwhile\ #1}%
\renewcommand{\cite}[1]{\citep{#1}}
\makeatletter
\def\thanks#1{\protected@xdef\@thanks{\@thanks
        \protect\footnotetext{#1}}}
\makeatother

\theoremstyle{plain}
\newtheorem{theorem}{Theorem}[section]
\newtheorem{proposition}[theorem]{Proposition}
\newtheorem{lemma}[theorem]{Lemma}
\newtheorem{corollary}[theorem]{Corollary}
\theoremstyle{definition}
\newtheorem{definition}[theorem]{Definition}
\newtheorem{assumption}[theorem]{Assumption}
\theoremstyle{remark}
\newtheorem{remark}[theorem]{Remark}

\usepackage[textsize=tiny]{todonotes}

\ifarxiv
\else
\icmltitlerunning{Submission and Formatting Instructions for ICML 2023}
\fi

\begin{document}

\ifarxiv
\title{ECCO: Equivalent Circuit Controlled Optimization}

\author{Aayushya Agarwal$^{1*}$, Carmel Fiscko$^{1*}$, Soummya Kar$^{1}$, Larry Pileggi$^{1}$, and Bruno Sinopoli$^{2}$
\thanks{\hspace{-2em}$*$These authors contributed equally.}%
\thanks{\hspace{-2em}$^{1}$Aayushya Agarwal, Carmel Fiscko, Soummya Kar, and Larry Pileggi are with the Dept. of Electrical and Computer Engineering at Carnegie Mellon University at 5000 Forbes Ave, Pittsburgh, PA 15213.}
\thanks{\hspace{-2em}$^{2}$Bruno Sinopoli is with the Dept. of Electrical and Systems Engineering at Washington University in St. Louis, MO at 1 Brookings Dr, St. Louis, MO 63130. {\tt\small bsinopoli@wustl.edu }}%
}
\maketitle

\else
\twocolumn[
\icmltitle{ECCO: Equivalent Circuit Controlled Optimization}



\icmlsetsymbol{equal}{*}

\begin{icmlauthorlist}
\icmlauthor{Firstname1 Lastname1}{equal,yyy}
\icmlauthor{Firstname2 Lastname2}{equal,yyy,comp}
\icmlauthor{Firstname3 Lastname3}{comp}
\icmlauthor{Firstname4 Lastname4}{sch}
\icmlauthor{Firstname5 Lastname5}{yyy}
\icmlauthor{Firstname6 Lastname6}{sch,yyy,comp}
\icmlauthor{Firstname7 Lastname7}{comp}
\icmlauthor{Firstname8 Lastname8}{sch}
\icmlauthor{Firstname8 Lastname8}{yyy,comp}
\end{icmlauthorlist}

\icmlaffiliation{yyy}{Department of XXX, University of YYY, Location, Country}
\icmlaffiliation{comp}{Company Name, Location, Country}
\icmlaffiliation{sch}{School of ZZZ, Institute of WWW, Location, Country}

\icmlcorrespondingauthor{Firstname1 Lastname1}{first1.last1@xxx.edu}
\icmlcorrespondingauthor{Firstname2 Lastname2}{first2.last2@www.uk}

\icmlkeywords{Machine Learning, ICML}

\vskip 0.3in
]



\printAffiliationsAndNotice{\icmlEqualContribution} 
\fi
\begin{abstract}
We propose an adaptive optimization algorithm for solving unconstrained scaled gradient flow problems that achieves fast convergence by controlling the optimization trajectory shape and the discretization step sizes. Under a broad class of scaling functions, we establish convergence of the proposed approach to critical points of smooth objective functions, while demonstrating its flexibility and robustness with respect to hyperparameter tuning. First, we prove convergence of component-wise scaled gradient flow to a critical point under regularity conditions. We show that this controlled gradient flow dynamics is equivalent to the transient response of an electrical circuit, allowing for circuit theory concepts to solve the problem. Based on this equivalence, we develop two optimization trajectory control schemes based on minimizing the charge stored in the circuit: a second order method that uses the true Hessian and an alternate first order method that approximates the optimization trajectory with only gradient information. While the control schemes are derived from circuit concepts, no circuit knowledge is needed to implement the algorithms. To find the value of the critical point, we propose a time step search routine for Forward Euler discretization that controls the local truncation error, a method adapted from circuit simulation ideas. In simulation we find that the trajectory control outperforms uncontrolled gradient flow, and the error-aware discretization out-performs line search with the Armijo condition. Our algorithms are evaluated on convex and non-convex test functions, including neural networks, with convergence speeds comparable to or exceeding Adam. 
\end{abstract}

\section{Introduction}
\label{sec:Intro}
Optimization is a key problem across all areas of science, research, and engineering. Popular first-order methods like gradient descent are commonly explained with the metaphor of "a ball rolling down a hill." The physics analogy provided intuition for extensions like acceleration methods. However, prominent methods such as Adam (\cite{kingma2014adam}) have relied more on the intuition of experienced researchers to propose new ideas than returning to the physics intuition. 

To study physics-inspired optimization techniques, we focus on the paradigm of gradient flow, which models the continuous-time trajectory of an optimization variable. Gradient flow techniques have been of key interest, as they allow convergence properties to be analyzed in terms of the functional trajectory. Gradient flow also allows an optimization process to be described as a dynamical control system.

In this work, we present a new structure for optimization whereby a component-wise scaled gradient flow can be \emph{modeled} as the transient response of an equivalent circuit (EC). The steady-states of this EC represent critical points of the objective function. Solving the optimization problem, therefore, may be achieved within the domain of solving the circuit. This work emphasizes the parallels between optimization and circuits, and demonstrates how circuit domain knowledge can be used to solve optimization problems. 

In this work, we propose using circuits to solve optimization problems for three main reasons. First, circuits may be built to produce general signals; this allows a circuit to model a variable in an objective function as the voltage at a node. The circuit, and therefore the optimization problem, can then leverage tools from circuit physics, engineering, and control literature. This enables concepts like electrical charge and energy to be used for the theoretical analysis and the proposal of better adaptive optimization tools.

The second motivation is that top-of-the-line circuit simulation software is capable of efficiently simulating systems of millions of variables and could solve optimization problems at massive \emph{scale}. Decades of research have produced sophisticated circuit simulation techniques, which combine circuit physics, numerical tricks, and domain knowledge into industry-standard tools. By representing optimization problems as circuits, the optimization community may take advantage of these key, currently-overlooked resources.

Finally, a long-term goal could be to build an analog "physical optimizer" machine based on this work; however, this goal is beyond the scope of the paper as here we solely focus on using circuit analysis to design optimization algorithms.




In this work, we first prove the convergence of general component-wise scaled gradient flow to a critical point of the objective function. Our goal then becomes to design proper scaling factors to accelerate the convergence speed. 

The scaling factors will be designed based on ideas from circuit theory. To this end, we design a general-purpose equivalent circuit such that the circuit's transient response is equal to a solution of the scaled gradient flow ODE. The node voltages of the EC are equal to the optimization variable $\x$ and the capacitor values model the scaling factors of the component-wise scaled gradient flow. The resulting design problem reduces to the selection of the capacitor values as a function of the node voltage to quickly dissipate stored charge. Once the circuit has dissipated all its charge, it has reached steady-state, the ODE has reached a fixed point, and the gradient flow has reached a critical point.

After establishing convergence of the trajectory to a critical point, the problem is ready to be solved numerically for the value of that critical point. While discretization techniques have been widely used in optimization theory, they have also been extensively studied for decades in the paradigm of circuit simulation, and the circuits literature has developed tools overlooked in the optimization world. In particular, circuit simulation chooses step sizes that bound the local truncation error (LTE) produced by an approximation in the numerical integration step. We adapt these ideas into a discretization process amenable for optimization algorithms. 

To the best of our knowledge, this is the first work formally using circuit theory for gradient flow, and using circuit simulation techniques to solve general optimization problems.



\textbf{Our main contributions} in this paper are 1) proving convergence of component-wise scaled gradient flow to a critical point, 2) proposing control policies for scaling factors to accelerate convergence speed, and 3) introducing LTE as a condition for backtracking line search in iterative optimization algorithms. In simulation we show that our controlled gradient flow algorithm converges faster than uncontrolled gradient flow, the LTE critera outperforms the Armijo condition for forward Euler discretization, and our overall algorithm is much less sensitive to hyperparameter tuning than Adam. This work is directly applicable to neural networks. No circuit background knowledge is needed to implement the proposed algorithms.

The problem formulation is presented in \S\ref{sec:Problem}. We show the component-wise scaled gradient flow in \S\ref{sec:gflow}, and formulate the equivalent circuit in \S\ref{sec:ec}. The control scheme is presented in \S\ref{sec:control}, followed by an approximate control scheme that does not use a Hessian in \S\ref{sec:acontrol}. Error-aware discretization techniques are presented in \S\ref{sec:discret}. We propose a specific algorithm that a practitioner can implement directly in \S\ref{sec:Algo}. The paper concludes with simulations including a power systems example and a sensitivity analysis to hyperparameters on a neural network in \S\ref{sec:Sims}.

\section{Literature Review}
\label{sec:Lit}
Gradient flow methods have been well-studied due to their potential to draw general conclusions in the continuous-time domain \cite{behrman1998efficient}, \cite{attouch1996dynamical}, \cite{brown1989some}. The gradient-flow formulation is alluring because it enables theoretical guarantees to be made without any error introduced by a discretization process.
Recent interest in gradient flow methods has given convergence analyses of scaled and normalized gradient flows \cite{murray2019revisiting}, distributed techniques \cite{swenson2021distributed}, momentum \cite{muehlebach2021optimization}  \cite{franca2018dynamical}, stochastic gradient descent \cite{latz2021analysis}, continuous-time mirror descent \cite{amid2020reparameterizing}, and analyzing ADMM \cite{franca2018admm}. 

 Gradient flow can be viewed as a dynamical system, thereby introducing concepts from control theory \cite{helmke2012optimization}, \cite{yuille1994statistical}.
 One important methodology in control systems is provided by Lyapunov theory, widely used to show stability of dynamical systems, which has emerged as a tool to show convergence of gradient flow \cite{cortes2006finite},  \cite{wilson2018lyapunov}, \cite{wilson2021lyapunov}, \cite{polyak2017lyapunov}, \cite{hustig2019robust}. 
 For example, analogies can be made with energy dissipation in physics, yielding insights into Lyapunov function construction and convergence analysis \cite{hu2017dissipativity}. We are only aware of one recent reference that uses circuits to model an optimization problem \cite{boyd2021distributed}, but there are several fundamental differences between this work and ours. \cite{boyd2021distributed} shows a circuit interpretation of distributed optimization, where sub-problems are connected via wires and the gradient of the objective is modeled using a nonlinear resistor. Voltages and currents represent primal variables and dual residuals, and concepts of convexity connect with passivity of the nonlinear resistors to establish convergence. In comparison, our approach shows that the voltage-current equations of a fully-connected circuit are equal to the scaled gradient flow differential equation, and we use an adjoint circuit and capacitor charge dissipation to design a trajectory that will quickly reach steady state. 

While useful theoretical results can be established in continuous-time, solving the ODE with a computer generally necessitates some discretization scheme. Simple examples are using constant or diminishing step sizes, or line search methods such as backtracking line search and the Wolfe conditions. Recent advances have used sophisticated explicit integration techniques to approximate the continuous system \cite{pmlr-v97-muehlebach19a}, \cite{lin2016distributed}, \cite{andrei2004gradient}, \cite{scieur2017integration}. Further discretization methods have been explored including implicit integration methods such as backward-Euler \cite{barrett2020implicit}. Few of these works \cite{barrett2020implicit},\cite{scieur2017integration}, \cite{andrei2004gradient} have used notions of LTE to determine appropriate time-steps for their respective numerical integration method. However, the idea of adapting step sizes based on LTE and stability is not well explored.

In this work, we draw on industrial discretization techniques developed for circuit simulation. One simulator, SPICE \cite{nagel1971computer}, analyzes the transient response of general analog circuits by solving an underlying stiff, nonlinear ODE. Unlike generic solvers, SPICE uses circuit physics to develop intuitive heuristics to solve the ODE.

\section{Problem Formulation}
\label{sec:Problem}
In this work, we consider the following unconstrained optimization problem:
    \begin{gather}
        \min_{\x} f(\x),\label{prob}\\
        \x^* \in \argmin_{\x} f(\x).
    \end{gather}
    where $\x\in \R^n$ and $f: \R^n\to\R$. It is known \cite{brown1989some} that  \eqref{prob} may be solved via the gradient flow ODE initial value problem (IVP),
    \begin{equation}
        \dxdt(t)=-\dfdxt,\quad \x(0)=\x_0, \label{gradient flow}
    \end{equation}
    where $\dxdt(t)$ refers to the time derivative of $\xt$. We consider component-wise scaled gradient flow IVP, where the RHS of \eqref{gradient flow} is multiplied by a positive diagonal matrix $Z^{-1}$.
    \begin{equation}
        \dxdt(t)=-Z(\xt)^{-1} \nabla f(\xt),\quad \x(0) = \x_0. \label{scaled gradient flow}
    \end{equation}
    
    The solution to an ODE IVP is computed with the integral $\x(t) = \x(0) + \int_0^t \dxdt(s) ds.$ In this work we assume that the following set of assumptions are satisfied:

    \begin{enumerate}[wide=\parindent,label=\textbf{(A\arabic*)}]
        \item $f\in C^2$ and $\inf_{\x\in\R^n}f(\x)>-R$ for some $R>0$. \label{a1}
        \item $f$ is coercive, i.e., $\lim_{\|\x\|\to\infty} f(\x) = +\infty$. \label{a2}
        \item $\nabla^2 f(\x)$ is non-degenerate. \label{a3}
        \item (Lipschitz and bounded gradients): for all $\x,\y\in\R^n$,  $\Vert \dfdx-\dfdy\Vert\leq L\Vert\x-\y\Vert$, and $\Vert\dfdx\Vert\leq B$ for some $B>0$. \label{a4}
        \item $Z(\x)^{-1}$ is diagonal for all $\x$ and $d_2>Z_{ii}(\x)^{-1}>d_1$ for all $i,\ \x$ and for some $d_1,\ d_2>0$.\label{z def}
    \end{enumerate}

    
    Note that $f$ is not assumed to be convex. Coercivity guarantees that there exists a finite global minimum for $f(\x)$ that is differentiable \cite{peressini1988mathematics}. 
    
    \begin{definition}
    We say $\x$ is a \emph{critical point} of $f$ if it satisfies $\nabla_{\x}f(\x) = \vec{0}$. Let $S$ be the set of \emph{critical points}, i.e. $S=\{\x\ |\ \dfdx = \vec{0}\}$.
    \end{definition}
    
    The coercivity and differentiability of $f$ guarantee that any minima are within the set $S$.

\section{Component-Wise Scaled Gradient Flow}
\label{sec:gflow}
To begin this optimization idea, we first establish convergence of the component-wise scaled gradient flow in \eqref{scaled gradient flow}. It can be shown that for any objective function $f$ and any scaling function $Z$ satisfying the stated assumptions, the gradient flow IVP will converge to some $\x$ within $S$. 

\begin{theorem} \label{convergence theorem}
Under \ref{a1} - \ref{z def}, consider the component-wise scaled gradient flow IVP in \eqref{scaled gradient flow}. Then,
\begin{equation}
    \lim_{t\to\infty}\Vert \dfdxt\Vert = 0.
\end{equation}
\end{theorem}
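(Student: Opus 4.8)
The plan is to treat the objective $f$ itself as a Lyapunov (energy) function and to combine the resulting energy-dissipation estimate with Barbalat's lemma. First I would establish that the IVP \eqref{scaled gradient flow} admits a solution defined for all $t\geq 0$: by \ref{a4} the gradient satisfies $\|\dfdxt\|\leq B$, and by \ref{z def} the diagonal $Z(\xt)^{-1}$ has operator norm at most $d_2$, so the vector field $-Z(\xt)^{-1}\dfdxt$ is uniformly bounded in norm by $d_2 B$. Together with continuity (from $f\in C^2$ and the regularity of $Z^{-1}$), this rules out finite-time blow-up and yields global existence; coercivity \ref{a2} additionally keeps $\xt$ in a compact sublevel set but is not strictly needed here, and \ref{a3} plays no role in this particular statement.

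Next I would differentiate $f$ along the flow: $\tfrac{d}{dt}f(\xt)=\dfdxt^\top\dxdt=-\dfdxt^\top Z(\xt)^{-1}\dfdxt$. Since $Z(\xt)^{-1}$ is diagonal with entries at least $d_1>0$ by \ref{z def}, this quadratic form is bounded below, giving the key dissipation inequality $\tfrac{d}{dt}f(\xt)\leq -d_1\|\dfdxt\|^2\leq 0$. Hence $f(\xt)$ is non-increasing, and since $f$ is bounded below by $-R$ via \ref{a1}, $f(\xt)$ converges to a finite limit as $t\to\infty$. Integrating the inequality from $0$ to $t$ yields $d_1\int_0^t\|\nabla f(\x(s))\|^2\,ds\leq f(\x_0)-f(\xt)\leq f(\x_0)+R$, a bound uniform in $t$, so letting $t\to\infty$ shows that $g(t):=\|\dfdxt\|^2$ is integrable on $[0,\infty)$.

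The crux is that square-integrability alone does not force $g(t)\to 0$. To close this gap I would invoke Barbalat's lemma, which additionally requires $g$ to be uniformly continuous. To verify uniform continuity I would bound $\dot g$: differentiating gives $\dot g(t)=-2\,\dfdxt^\top\nabla^2 f(\xt)\,Z(\xt)^{-1}\dfdxt$, and using $\|\nabla^2 f\|\leq L$ (the operator-norm form of the Lipschitz-gradient condition in \ref{a4}), $\|\dfdxt\|\leq B$, and the $d_2$ bound on $Z^{-1}$, I obtain $|\dot g(t)|\leq 2Ld_2 B^2$. A bounded derivative makes $g$ Lipschitz, hence uniformly continuous.

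With $g$ integrable and uniformly continuous, Barbalat's lemma gives $\lim_{t\to\infty}g(t)=0$, i.e. $\lim_{t\to\infty}\|\dfdxt\|^2=0$, which is exactly the claimed $\lim_{t\to\infty}\|\dfdxt\|=0$. The main obstacle is precisely this passage from integrability to pointwise decay, and the uniform-continuity estimate — where the global Hessian bound and gradient bound of \ref{a4} do the real work — is what resolves it. An alternative route would replace Barbalat by LaSalle's invariance principle, using compactness of sublevel sets from \ref{a2} together with the fact that $\tfrac{d}{dt}f=0$ only on $S$ (since $Z^{-1}$ is positive definite), but the Barbalat argument delivers the gradient-norm statement most directly.
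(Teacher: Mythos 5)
Your proposal is correct and takes essentially the same route as the paper: the dissipation inequality $\tfrac{d}{dt}f(\xt)\leq -d_1\Vert\dfdxt\Vert^2$ from \ref{z def}, integration against the lower bound in \ref{a1} to get square-integrability of $\Vert\dfdxt\Vert^2$, and then the Barbalat-type conclusion (integrable plus uniformly continuous implies vanishing limit), which the paper invokes without naming it. The only cosmetic difference is the verification of uniform continuity: you bound $\dot g$ directly via the Hessian norm $\Vert\nabla^2 f\Vert\leq L$, whereas the paper shows $\x\mapsto\Vert\dfdx\Vert^2$ is $2BL$-Lipschitz and composes it with the uniformly continuous trajectory $t\mapsto\xt$ — both rest on exactly the same assumptions \ref{a4} and \ref{z def}.
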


The proof is provided in Appendix \ref{main proof}. Given Theorem \ref{convergence theorem}, the next objective is to design $Z$ that satisfies \ref{z def} and yields faster convergence than the base case of $Z=I$. To tackle this objective, we next demonstrate that \eqref{scaled gradient flow} is mathematically equal to the transient response of an electrical circuit; therefore the circuit can be used to design  $Z$.

\subsection{Equivalent Circuit of an Optimization Problem}
\label{sec:ec}
\label{sec:equivalent-circuit}

The goal of this section is to develop a circuit whose transient response is equivalent to \eqref{scaled gradient flow}. The EC is composed of $n$ sub-circuits, with each sub-circuit representing the transient waveform of a single variable, $\x_i(t)$. The diagram of a single sub-circuit is shown in Figure \ref{fig:ECorig}. Each sub-circuit is composed of two elements: a nonlinear capacitor (on the left) and a voltage controlled current source (VCCS) represented by the element on the right. Note that for multi-dimensional $\x$, the sub-circuits are coupled via the VCCSs. 

The voltage across the capacitor is defined to be $\x_i(t)$, and the capacitance of the capacitor is likewise defined as $Z_{ii}(\xt)$. Based on the voltage-current relationship of a capacitor, the current will be equal to $I_i^c(\x(t))=\Zeta_{ii}(\xt)\dxdt_i(t)$ where $c$ labels attachment to the capacitor and $i$ indexes the element of $\x$. From Kirchhoff's Current Laws (KCL) \cite{desoer2010basic}, the capacitor current must be equal to the negative of the current produced by the VCCS; therefore $Z(\xt)\dot{\x}(t)=-\dfdxt$ which is identical to \eqref{scaled gradient flow}. If the circuit reaches steady-state at some time $t'$, the capacitor no longer produces a current ($I_c(\x(t))_{t\geq t'}=0$) and therefore the node voltage $\x(t)_{t\geq t'}$ remains stationary. By KCL, the VCCS elements also produce zero current at steady-state ($\dfdxt_{t\geq t'}=\vec{0}$), implying that we have reached a point where $\dfdxt=\vec{0}$, which is defined as a critical point of the objective function.





\begin{figure}
\centering
\begin{minipage}[t]{.48\linewidth}
  \centering
    \begin{circuitikz}[scale=0.45]
    \ctikzset{label/align = rotate}
    \draw
    (0,0) to[C=\small $Z_{ii}(\xt)$] (0,3)
      to[short] (3,3)
      to[cI, l=\small $\frac{\partial f(\x(t))}{\partial \x_i}$, label distance=3pt] (3,0)
      to[short] (0,0)
    ;
    \draw (3,3) to [short,-o] (4,3) node[above]{\small $\x_i(t)$};
    \draw (1.5,0) node[ground] (){};
    \end{circuitikz}
    \caption{\small{Equivalent Circuit Model of \eqref{scaled gradient flow}} }
    \label{fig:ECorig}
\end{minipage}%
\hfill
\begin{minipage}[t]{.48\linewidth}
  \centering
   \begin{circuitikz}[scale = 0.45]
   \ctikzset{label/align = rotate}
   \draw
    (0,0) to[C, i^<= \small\color{white} $0$, ] (0,3)
      to[short] (3,3)
      to[cI, l=\small $\quad \frac{d}{dt} \frac{\partial f(\x(t))}{\partial \x_i}$, label distance=3pt] (3,0)
      to[short] (0,0)
    ;
    \draw (3,3) to [short,-o] (4,3) node[above]{\small $\dot\x_i(t)$};
    \draw (1.5,0) node[ground] (){};
    \draw (-0.5,3.8) circle [radius=0] node {\small $\frac{d}{dt}(Z(\xt)\dot\x(t))_i$};
    \end{circuitikz}
    \caption{\small{Adjoint Equivalent Circuit Model of \eqref{time deriv}}}
    \label{fig:ECadj}
\end{minipage}
\end{figure}

\subsection{Adjoint Equivalent Circuit Model}
To gain insight into the energy transfer in the equivalent circuit model, we also construct a circuit representation for the behavior of $\dxdt(t)$. Taking the time derivative of \eqref{scaled gradient flow}, we construct a circuit representation called the \emph{adjoint circuit}, as shown in Figure \ref{fig:ECadj}. Similar to the equivalent circuit model, each adjoint sub-circuit is composed of a capacitor and a VCCS element, with the node voltage now representing $\dot{\x}_i(t)$. The adjoint capacitor has a current of $ \bar{I}_i^c(\x(t)) = \frac{d}{dt}[Z(\xt) \dxdt(t)]_i$ where:

\small
\begin{align}
    \frac{d}{dt}\left(Z(\xt)\dot{\x}(t)\right) &= - \frac{d}{dt} \dfdxt = -\nabla^2 f(\xt) \frac{d \x(t)}{dt},\nonumber\\
    &= \nabla^2 f(\xt) Z(\xt)^{-1} \dfdxt.\label{time deriv}
\end{align}
\normalsize


The energy of the adjoint circuit is analyzed to provide intuition on controlling the circuit; when the adjoint circuit is at steady-state then $\dot{\x}(t)=0$, meaning that the original circuit is also at a steady-state.  The capacitor in the adjoint circuit is initially charged to  $\bar{Q}_i^c(\x(0))$ and discharges to reach steady-state. The energy stored in the capacitor is proportional to the charge of the capacitor, which in turn is:
\begin{equation}
    \bar{Q}_i^c(\xt) = \int_0^t\bar{I}_i^c(\x(t))dt = \Zeta_{ii}(\xt)\dot{\x}_i(t). \label{charge}
\end{equation}
Let $\bar{Q}_c(\xt) = [\bar{Q}^c_1(\xt), \dots, \bar{Q}^c_n(\xt)]^{\top}$.


Any component-wise scaled gradient flow problem \eqref{scaled gradient flow} satisfying \ref{a1} - \ref{z def} can thus be modeled as an EC and adjoint EC. In the next section, we will use the circuit formulation to find controls that ensure convergence steady-state.

\subsection{Control Matrix $Z$}
\label{sec:control}
\label{sec:control}
The next step is to construct a control policy for the scaling matrix $Z^{-1}$ by leveraging the circuit formulation. In the circuit sense, fast convergence to a critical point of the optimization problem is equivalent to fast convergence to a stable steady-state of the dynamical system. Thus, we must choose the nonlinear capacitances $Z$ as a function of $\xt$ to discharge the capacitors as quickly as possible. 

The squared charge stored in the adjoint circuit capacitor at some time $t$ is equal to $\Vert \bar{Q}_c(\xt)\Vert^2$, which by \eqref{charge} is equal to  $\Vert Z(\xt)\dot{\x}(t))\Vert^2$ and by \eqref{scaled gradient flow} is equal to $\Vert \nabla f(\xt)\Vert^2$. Thus to quickly \emph{dissipate} charge, we define the following optimization problem, which maximizes the negative time gradient of the charge:

\vspace{-0.5cm}
\small
\begin{align}
    &\max_{Z}- \frac{d}{dt}\Vert \bar{Q}_c(\xt) \Vert^2=\max_{Z}- \frac{d}{dt}\Vert \dfdxt \Vert^2,\\
    &=\max_Z \dfdxt^{\top}\nabla^2 f(\xt)Z^{-1}(\xt)\dfdxt.\label{max dvdt}
\end{align}
\normalsize

The optimization problem must yield a diagonal $Z$. To this end, we expand $\dfdxt$ to a diagonal matrix and shrink $Z^{-1}$ to a vector. Define $G(\xt)$ be a diagonal matrix where the diagonal elements are the gradient $G_{ii}(\xt) = \frac{\partial f(\xt)}{\partial \x_i(t)}$. Let $\mathbf{z}$ be a vector where $\mathbf{z}_i=Z_{ii}^{-1}$. Then \eqref{max dvdt} is equal to,
\begin{align}
    &=\max_{\z} \dfdxt^{\top}\nabla^2 f(\xt)G(\xt)\z-\frac{\delta}{2} \Vert \z\Vert^2,
\end{align}
where a regularization term with $\delta>0$ has been added for tractability. Taking the derivative $\frac{\partial}{\partial \z}$:
\begin{gather}
    G(\xt)\nabla^2f(\xt) \dfdxt -\delta \z \equiv \vec{0}.\\
    \z = \frac{1}{\delta} G(\xt) \nabla^2 f(\xt)\dfdxt.\label{final z}
\end{gather}
We find that truncating any $\z_i<1$ to $\z_i=1$ performs well in practice as this ensures that $Z^{-1}$ performs at least as well as the base case where $Z^{-1}=I$, i.e. uncontrolled gradient flow. This truncation choice also ensures positivity and invertibility of $Z$. For some $\delta>0$, the final construction of the control matrix $Z^{-1}(\xt)$ is thus:
\begin{equation}
    Z_{ii}^{-1}(\xt) = \max\{\delta^{-1} [G(\xt) \nabla^2 f(\xt)\dfdxt]_i,1\}.\label{z true}
\end{equation}

Equation \eqref{z true} is a second order method as it uses Hessian information. Given the gradient and Hessian, the per-iteration computation complexity is $\mathcal{O}(n^2)$ to evaluate $Z_{ii}^{-1}$ at a specific $\xt$\footnote{See Appendix \ref{complexity full}}. In comparison to Newton methods, this $Z$ does not require a Hessian inversion step.

We show that \eqref{z true} satisfies Assumption \ref{z def} in the Appendix \ref{zbounded}. Note that while $\delta$ is a hyperparamter, in practice we find it is has little impact on performance with normalization of $Z^{-1}$, and can often be set to $\delta=1$\footnote{See Appendix \ref{more robust}}.

\subsection{Approximate Control Scheme $\widehat{Z}$}
\label{sec:acontrol}

The proposed control scheme may be used directly as in \eqref{z true}; however, it requires computation of the full Hessian. For applications in which the Hessian in unavailable or expensive, such as machine learning, we present an approximation that may be computed from only gradient information.

Define the finite difference approximation, i.e. the first order Taylor expansion, of the optimization trajectory $\frac{d}{dt}\dfdxt$,
\begin{align}
    \ahat(\xt)\triangleq \frac{\dfdxt - \nabla f(\x (t-\Delta t))}{\Delta t}. \label{ahat}
\end{align}
The limit as $\Delta t\to 0$ is exactly equal to desired quantity $\frac{d}{dt}\nabla f(\xt)$, making this an apt approximation for small $\Delta t$. Based on \eqref{ahat}, approximation for $Z^{-1}$ can be defined:
\begin{equation}
     \widehat{Z}_{ii}(\xt)^{-1}=\sqrt{\max\{[-\delta^{-1}G(\xt)\ahat(\xt)]_i,1\}}. \label{z approx}
\end{equation}

The derivation is shown in Appendix \ref{z approx deriv}. We verify that \eqref{z approx} satisfies Assumption \ref{z def} in Appendix \ref{zhat bounded}. Given the gradients, the per-iteration computation complexity is $\mathcal{O}(n)$ to evaluate $\widehat{Z}_{ii}^{-1}$ at a specific $\xt$\footnote{See Appendix \ref{complexity approx}}. 

We form $\widehat{Z}$ by approximating the trajectory of the optimization variable at a specific $t$ and solving for $Z$ as a function of $\ahat$; we do not estimate the Hessian matrix, and therefore $\widehat{Z}$ is not a quasi-Newton method. The proposed iterative update using \eqref{z approx} is a scaled gradient or step-size normalized descent method, similar in spirit to normalized gradient flow (in continuous time) and step-size or momentum scaling methods methods such as Adam, RMSprop, and Adagrad.

\section{Circuit Simulation Inspired Discretization}
\label{sec:discret}
Armed with a continuous ODE that converges to a critical point, the next step is to find the value of the steady-state. Three obvious methods from the circuits world are: physical circuitry, commercial circuit simulators, and circuit-based discretization techniques. Building a physical circuit to solve an EC is an attractive long-term goal as it could automatically solve large classes of problems, i.e. neural networks, and could eliminate computational issues of standard iterative solvers. Finding the steady-state value would involve building an analog VCCS for the objective function, turning on power, and then measuring the voltage at the node representing $\x(t)$. Despite these advantages, building a physical analog circuit is difficult because fabricating an application-specific device leads to issues in analog computing such as energy inefficiency, noise and process variations.

The second option is to use circuit simulating software, draw a schematic of the EC, and observe the transient waveform to steady-state. Decades of research have yielded top-of-the-line industry tools such as SPICE \cite{nagel1975spice2}, LTSpice \cite{ltspice} and MultiSim \cite{multisim}. For example, we can build an EC to minimize $f(\x) = \frac{5}{2}\x^2 + \x$ in MultiSim. The schematic and transient simulation for the EC model is shown in Figure \ref{fig:multisim}. Simulating the transient waveform with an initial condition of $\x(0)=1V$, i.e. initial guess of $\x_0 = 1$, we see that the node-voltage reaches a steady-state of $\x=-0.2V$ at 100$\mu s$. It can be verified that $\x^*=-0.2$ is the optimum.

\ifarxiv
\begin{figure}
    \centering
    \includegraphics[width=0.6\linewidth]{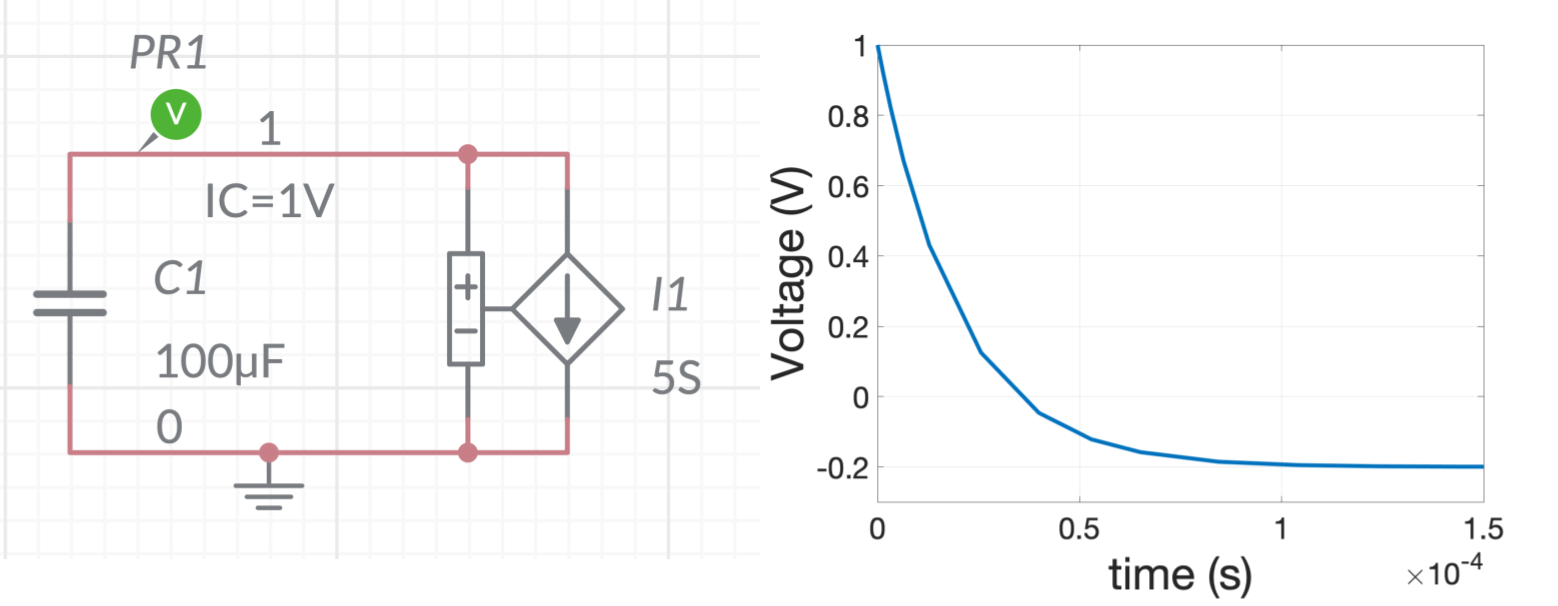}
    \caption{Schematic and EC model to minimize $f(\x) = \frac{5}{2}\x^2 + \x$ via gradient flow with MultiSim.}
    \label{fig:multisim}
\end{figure}
\else
\begin{figure}
    \centering
    \includegraphics[width=0.9\linewidth]{Figures/composite multisim.png}
    \caption{Schematic and EC model to minimize $f(\x) = \frac{5}{2}\x^2 + \x$ via gradient flow with MultiSim.}
    \label{fig:multisim}
\end{figure}
\fi

In this work, we borrow the technical backend to SPICE's transient simulation method and adapt it for optimization.

\subsection{Error-Aware Discretization} \label{our disc}

SPICE solves for the transient waveform by approximating the state of the system in discrete time-steps. The accuracy of the approximation relies on the type of numerical integration and size of the time-step, $\Delta t$.  SPICE maximizes $\Delta t$ subject to accuracy and stability conditions; parallels may be drawn to backtracking line search in optimization. 

We propose a FE discretization, but we adapt the step size to the shape of the waveform based on circuit simulation ideas: (1) the initial step size is selected according to a passivity metric, and (2) this guess is refined to satisfy an error metric. These techniques are widely used in circuit simulation, but to the best of our knowledge have not been applied to optimization theory. These techniques will argue why the simple FE excels over other discretization schemes. 

For any $\Delta t$ a discretization technique aims to approximate the solution integral $\x(t+\Delta t) -\x(t)=\int_{t}^{t+\Delta t} \dot{\x}(t) dt.$ Circuit simulation tools such as SPICE choose step sizes $\Delta t$ by controlling the accuracy of their approximation via the \emph{local truncation error} (LTE), which measures the goodness of the approximation \cite{pillage1995electronic}. SPICE yields accurate simulation as it chooses a tiny LTE, normally on the order of $10^{-5}$ paired with an implicit integration technique. Compared to explicit methods (FE), implicit methods are much more accurate but are far slower. Implicit methods require the costly solving of nonlinear equations at each step, plus a smaller LTE naturally means more steps must be taken. For example, we can solve the EC model for a Rosenbrock test function \cite{testfunc} using an implicit discretization method (Matlab ODE45). With the same LTE tolerance, the implicit solver has a wall-clock time that is 3.62x \emph{slower} than an explicit integration technique to reach the same local optimum.

Our goal is to reach a steady-state \emph{fast}. We do not need to exactly find the solution trajectory of the ODE; we only want its fixed point. As a result, we relax the accuracy constraints imposed by SPICE, allowing a larger LTE tolerance and the selection of easy-to-compute methods like FE. Furthermore, the LTE for FE can be approximated from only gradient information as described in chapter 4 of \cite{pillage1995electronic}  :
\begin{align}
   LTE &= 0.5 \Delta t |I_i^c(\x(t+\Delta t)) - I_i^c(\x(t))|,\nonumber\\
   &= 0.5 \Delta t |\dfdxt - \nabla f(\x(t+\Delta t))|.
    \label{eq:lte}
\end{align}
\normalsize

Explicit integration techniques like FE, however, often encounter issues of compounding LTE, causing numerical instability. It is generally difficult for numerical solvers to enforce a stability check for explicit integration methods as the final steady-state is unknown \cite{pillage1995electronic}. Numerical instability issues may be observed when solving the EC in a commercial numerical solver with FE.

In this formulation, however, we know that the trajectory should satisfy $f(\x(t))>f(\x(t+\Delta t))$ for $\x(t)\neq \x(t+\Delta t)$. As a stability check, we can simply exploit the optimization formulation and require that it monotonically decreases in time and goes towards a local minimum. For example, the EC model for the Rosenbrock function is solved using an explicit integration method in MATLAB (MATLAB ODE23), with an initial condition of $\x(0)=[0,0]$. The trajectory of the MATLAB ODE23 discretization, shown in Figure \ref{fig:rosenbrock_ode_comparison}, is unable to converge to a steady-state for the same LTE tolerance as ECCO.

\ifarxiv
\begin{figure}
    \centering
    \includegraphics[width=0.4\linewidth]{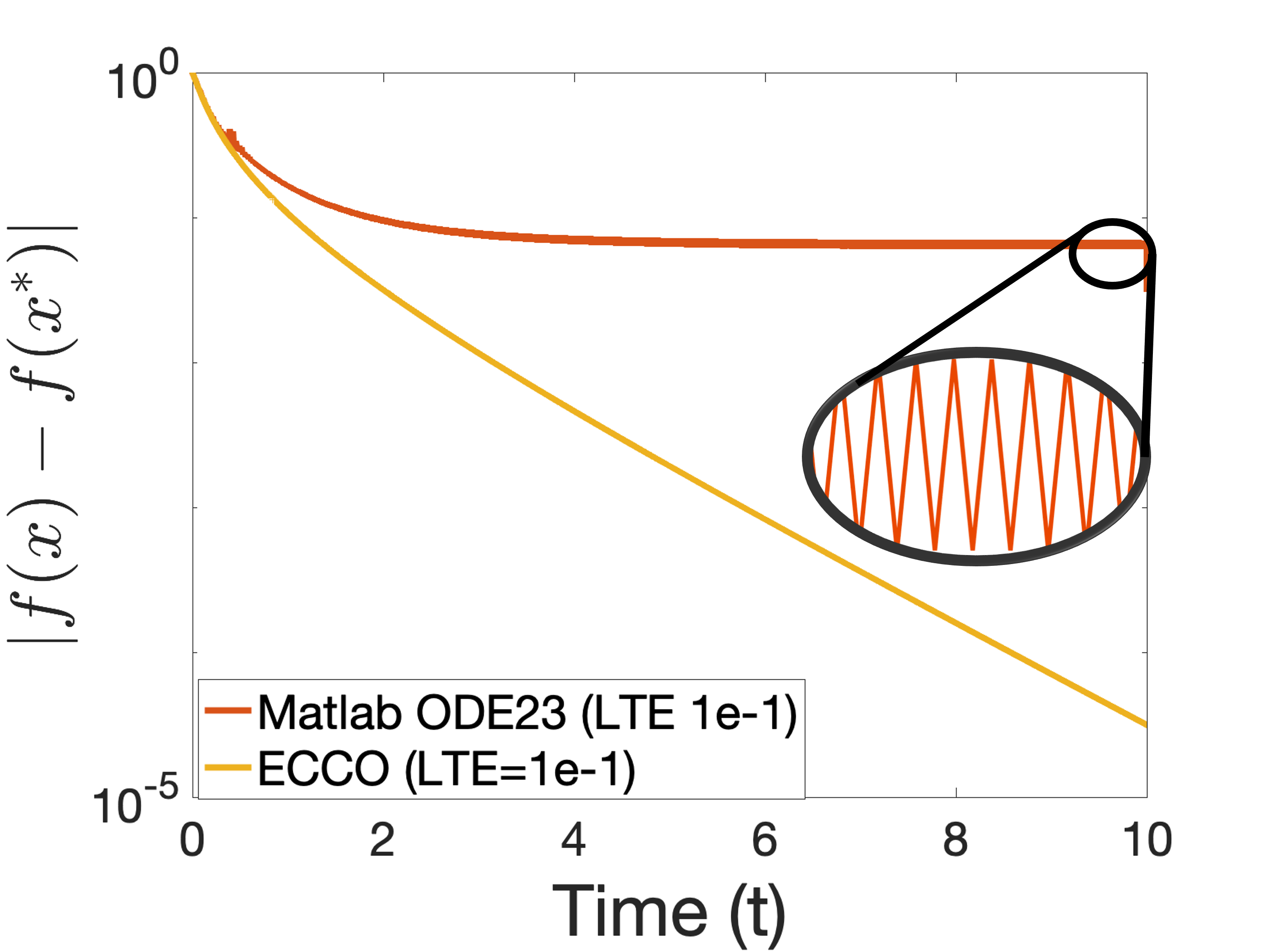}
    \caption{\small{Trajectory of Rosenbrock Function with initial condition $\x(0)=[0,0]$ using Matlab ODE23 and FE-EATSS. Note Matlab ODE23 oscillates while FE+EATSS reaches steady-state.}}
    \label{fig:rosenbrock_ode_comparison}
\end{figure}
\else
\begin{figure}
    \centering
    \includegraphics[width=0.55\linewidth]{Figures/AIStats Simulations/rosenbrock_matlab_comp_zoomed.png}
    \caption{\small{Trajectory of Rosenbrock Function with initial condition $\x(0)=[0,0]$ using Matlab ODE23 and FE-EATSS. Note Matlab ODE23 oscillates while FE+EATSS reaches steady-state.}}
    \label{fig:rosenbrock_ode_comparison}
\end{figure}
\fi

Our LTE-aware discretization is presented in Algorithm \ref{time step algo}.  A $\Delta t$ is selected at every iteration subject to two criteria: (1) satisfying an acceptable LTE tolerance level $\eta>0$, and (2) monotonically decreasing the objective function value\footnote{Note that condition (1) is taken from circuit simulation theory and condition (2) is commonly used in traditional optimization theory as a version of the Armijo condition. This work attempts to unite the best ideas from both fields.}. The time step is selected by choosing an initial guess $(\Delta t)_0$, and then scaling up or scaling down the guess until the largest time step that satisfies both conditions is found. 

The first initial time step guess is borrowed from circuit simulation literature.  A $(\Delta t)_0$ guess that may satisfy the LTE criteria for FE discretization is known to be:
\begin{equation}
    (\Delta t)_0 = \frac{\x(0)^{\top}\nabla f(\x(0))}{ \sum_i Z_{ii}(\x(0))^{-1}\frac{\partial f}{\partial \x_i}\big|_{\x(0)}}. \label{tfe-naive}
\end{equation}

This time step is based on linearizing the circuit about the operating point at time $t$ and choosing a time step that ensures that the linear circuit, composed of the capacitor and the linearized VCCS elements, is passive. Further explanation on this choice of initial $\Delta t$ is provided in \cite{rohrer1981passivity}. For subsequent time steps, the last executed choice of $\Delta t$ is used as the next initial guess \footnote{We find in simulation that \eqref{tfe-naive} often yielded step sizes that satisfied the LTE and monotonicity checks within a few inner loop iterations. See Appendix \ref{backtrack} for experimental details.}.


\section{Main Algorithm}
\label{sec:Algo}
In this section, we summarize how to use scaled gradient flow and error-aware FE discretization to solve optimization problems algorithmically: Equivalent Circuit Controlled Optimization (ECCO) in Algorithm \ref{bigger algo}. The output of the algorithm is a critical point of \eqref{prob}. A user must decide to implement either the second order \eqref{z true} or the first order control scheme \eqref{z approx}. If the Hessian can be easily computed then the second order method is preferable because no error will be incurred from the trajectory approximation. For applications in which the Hessian is slow to query, such as in machine learning, then the speed of the first order option will likely out-weigh the effects of the approximation error.

We emphasize that the proposed algorithms are \emph{robust} to the hyperparameters. The convergence of the continuous-time trajectory is independent of $\delta$, so while its value can slow the convergence speed (large $\delta$ recovers general gradient flow, small $\delta$ will be "ignored" with normalization) it cannot cause divergence. $(\Delta t)_0$, $\alpha$, and $\beta$ only affect the speed of the time search step; the LTE tolerance is the only discretization hyperparameter that will affect the number of steps taken as it bounds the maximum deviation of the approximation from the true trajectory. As the LTE condition is coupled with the stability check to ensure that $f(\x(t))>f(\x(t+\Delta t))$ for $\x(t)\neq \x(t+\Delta t)$, $\eta$ cannot cause divergence. In simulation we find $\eta$ can be changed by orders of magnitude with little effect on the optimization trajectory. We find that $\eta=0.1$, $\delta=1$, and normalizing $Z^{-1}$ at each iteration works well\footnote{In circuits literature, $\eta$ is commonly chosen as 0.001 \cite{ltspice}, but we recommend larger $\eta$ based on the discussion in \S\ref{our disc}.} .

\begin{algorithm}[h]\small
 \SetAlgoLined
 \LinesNumbered
 \SetKwInOut{Input}{Input}
 \Input{$f(\cdot)$, $\nabla_{\x} f(\cdot)$, $\x(0)$, $\delta>0$, $\alpha \in (0,1)$, $\beta>1$, $\eta> 0$, $\epsilon> 0$}
 \KwResult{$\x\in S$}
 \SetKwProg{Function}{function}{}{end}
 $t\gets 0$\;
 
 $(\Delta t)_0$ according to \eqref{tfe-naive}\;
 
 \SetKwRepeat{Do}{do}{while}
     \Do{$\|f(\x(t-\Delta t))-f(\x(t))\|> \epsilon$}
     { 
     Choose $Z_{ii}^{-1}(\x(t))$ according to \eqref{z true} or \eqref{z approx}\;
     
    $\Delta t \gets$ \Call{EATSS}{$f(\cdot)$,  $\nabla f(\cdot)$, $\x(t)$, $Z(\xt)$, $(\Delta t)_t$, $\alpha\in(0,1)$, $\beta>1$, $\eta>0$}\;
    
    $\x(t+\Delta t)=\x(t)-\Delta t Z(\xt)^{-1}\dfdxt$\;
    
    Take step $t\gets t+\Delta t$\;
    
    $(\Delta t)_t\gets \Delta t$\;
     }
 \caption{Equivalent Circuit Controlled Optimization (ECCO)}\label{bigger algo}
\end{algorithm}

\begin{algorithm}[h]\small
\setcounter{AlgoLine}{0}
 \SetAlgoLined
 \LinesNumbered
 \SetKwInOut{Input}{Input}
 \Input{$f(\cdot)$,  $\nabla f(\cdot)$, $\x(t)$, $Z(\xt)$, $(\Delta t)_t$, $\alpha\in(0,1)$, $\beta>1$, $\eta>0$}
 \KwResult{$\Delta t$}
 \SetKwProg{Function}{function}{}{end}
 $\Delta t \gets (\Delta t)_t$\;
 
 $\x(t+\Delta t)=\x(t)-\Delta t Z(\xt)^{-1}\nabla f(\x(t))$\;
 
 $LTE = 0.5\Delta t|\dfdxt - \nabla f(\x(t+\Delta t))|$\;
 
 \While{$\max(LTE) < \eta$ and $f(\x(t+\Delta t)) < f(\x(t))$}{
 $\Delta t = \beta \Delta t$\;
 }
 \While{$\max(LTE) > \eta$ or $f(\x(t+\Delta t)) > f(\x(t))$}{
 $\Delta t = \alpha \Delta t$\;
 }
 \caption{Error Aware Time Step Search (EATSS)}\label{time step algo}
\end{algorithm}

\section{Simulations}
\label{sec:Sims}
We demonstrate the ECCO methodology in a variety of examples. We consider the following cases: (1) full Hessian ECCO $Z$ \eqref{z true}, (2) first order ECCO $\widehat{Z}$ \eqref{z approx}, and comparison methods (3) baseline gradient flow, i.e. $Z\equiv I$ for all $\x$, paired with FE+EATSS discretization, (4) Adam optimization, and (5) gradient descent with backtracking line search and the Armijo condition \cite{wolfe1969convergence}. Method (1) versus method (2) demonstrates the performance loss without the Hessian, and (1) versus (3) demonstrates the improvement in performance when designing $Z$ for fast convergence. Note that baseline gradient flow with standard FE discretization is equivalent to gradient descent, which may be considered as method (4); this means that (3) and (4) show the improvement provided by FE+EATSS without any trajectory control via $Z$. In all experiments, ECCO used $\delta=1$, $\alpha = 0.9$, $\beta=1.1$, $\eta=0.1$. The hyperparameters for all comparison methods were tuned on every experiment. 
 
\subsection{Test Functions}
\ifarxiv
\begin{figure}
    \centering
    \includegraphics[width=0.7\linewidth]{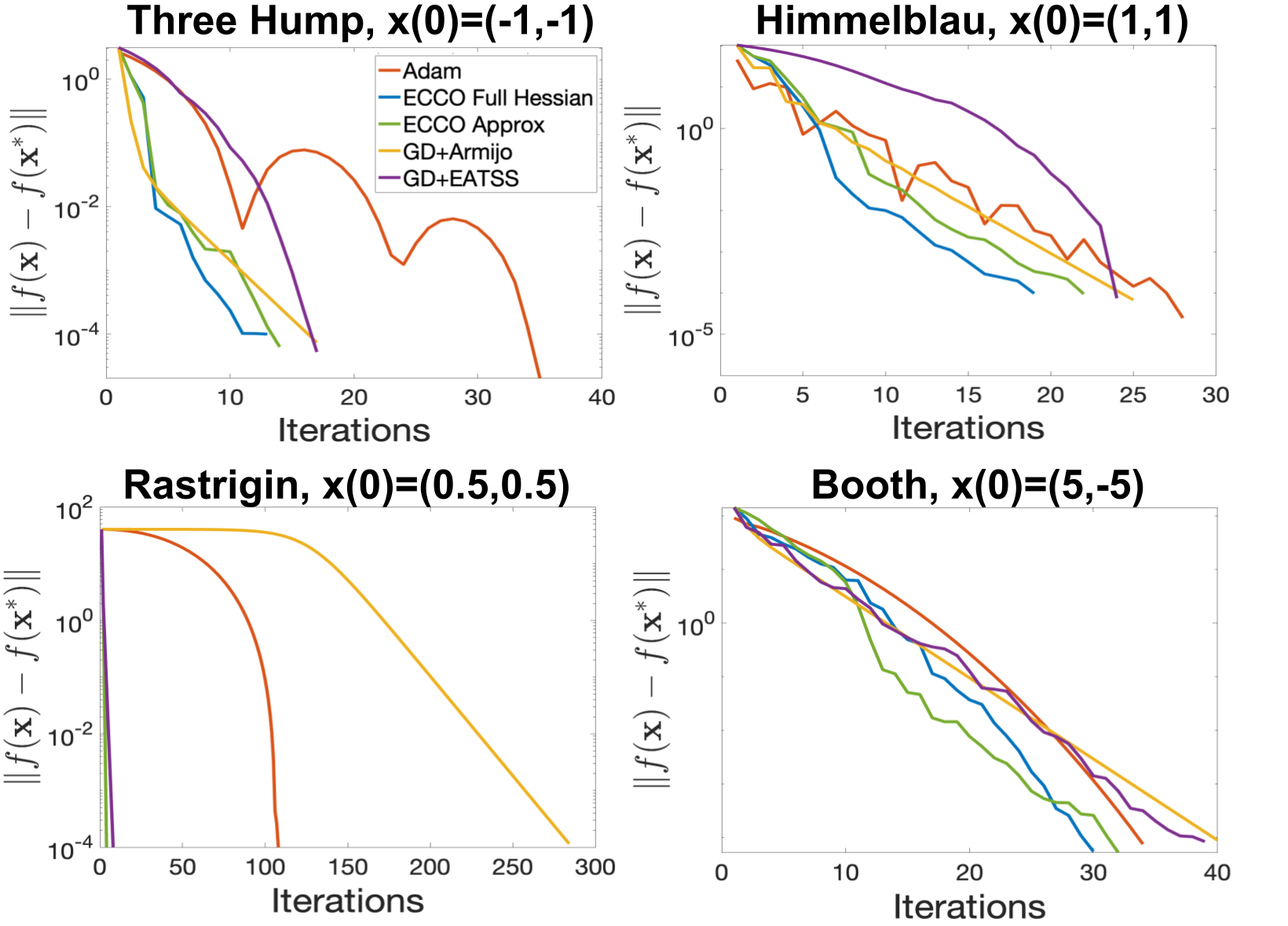}
    \caption{\small{Comparison of Optimization Methods on Test Functions. Note that in the Rastrigin test function, the plots for the ECCO full Hessian case and the ECCO approximate Hessian case overlap. ECCO consistantly converges faster than baseline gradient flow, gradient descent with line search, and Adam.}}
    \label{fig:results}
\end{figure}
\else
\begin{figure}
    \centering
    \includegraphics[width=0.99\linewidth]{Figures/AIStats Simulations/results.png}
    \caption{\small{Comparison of Optimization Methods on Test Functions. Note that in the Rastrigin test function, the plots for the ECCO full Hessian case and the ECCO approximate Hessian case overlap. ECCO consistantly converges faster than baseline gradient flow, gradient descent with line search, and Adam.}}
    \label{fig:results}
\end{figure}
\fi

To evaluate ECCO's convergence properties, we first optimize convex and nonconvex test functions in order to compare performance to known local minima. See Appendix \ref{all toy graphs} for the full experiment details and results. Convergence was defined as $\x_k$ such that $\Vert f(\x_k)-f(\x_{k+1})\Vert <1e-4$. The results are shown in Figure \ref{fig:results}. First note that $\widehat{Z}$ appears to well approximate  $Z$. Second, these results demonstrate that both $Z$ and $\widehat{Z}$ provide faster convergence than baseline gradient flow where all the scaling factors are one. Third, note that the FE+EATSS discretization generally provides a speed up over FE with backtracking line search and the Armijo condition. Finally, note that ECCO provided strong results without needing hyperparameter tuning.


\subsection{Power Systems Optimization} 
We next apply ECCO to a power systems optimization problem to demonstrate its performance on an applied example. In this experiment, we optimize for the state variables in a stressed power grid network (a loading factor of 1.5) that has undergone a line contingency (line outage from bus 8 to 30). The resulting post-contingency network is infeasible and we optimize for the state variables, (which include bus voltage magnitude, angles and generator reactive powers) that account for the smallest infeasibility in the network. The solution represents the state of the network with the fewest violations of the power flow constraints. This information is vital to determine the location and amount of additional assets required to satisfy a contingency scenario for infrastructure planning applications. 

The optimization problem is $min_\x \frac{1}{2} \| f(\x) \| ^2$ where $f(\x)$ defines the power-flow constraints at each bus (defined in \cite{FOSTER2022108486}). The objective is a multi-modal, non-convex function and the dimensionality is 291.
We optimize this objective using ECCO with a full Hessian \eqref{z true}, the approximate ECCO implementation \eqref{z approx}, GD with Armijo Line Search, BFGS \cite{fletcher2013practical}, SR1 \cite{conn1991convergence} and Newton-Raphson \cite{fletcher2013practical}. The results are displayed in Figure \ref{fig:power}. In this example, Newton-Raphson diverged immediately and is not shown on the graph. Full Hessian ECCO converges to a local minimum, as do approximate ECCO, BFGS, SR1, and GD; however, full ECCO found the best local optimum with the lowest total infeasibility. Approximate ECCO achieved the same local minimum as BFGS and SR1 in less than a third of the iterations. In terms of wall-clock time, ECCO reached the local minimum in 3.74s, where as BFGS took 7.89s and SR1 took 9.14s. These results demonstrate that ECCO can perform well on general nonconvex multi-modal optimization problems.

\ifarxiv
\begin{figure}
    \centering
    \includegraphics[width=0.4\linewidth]{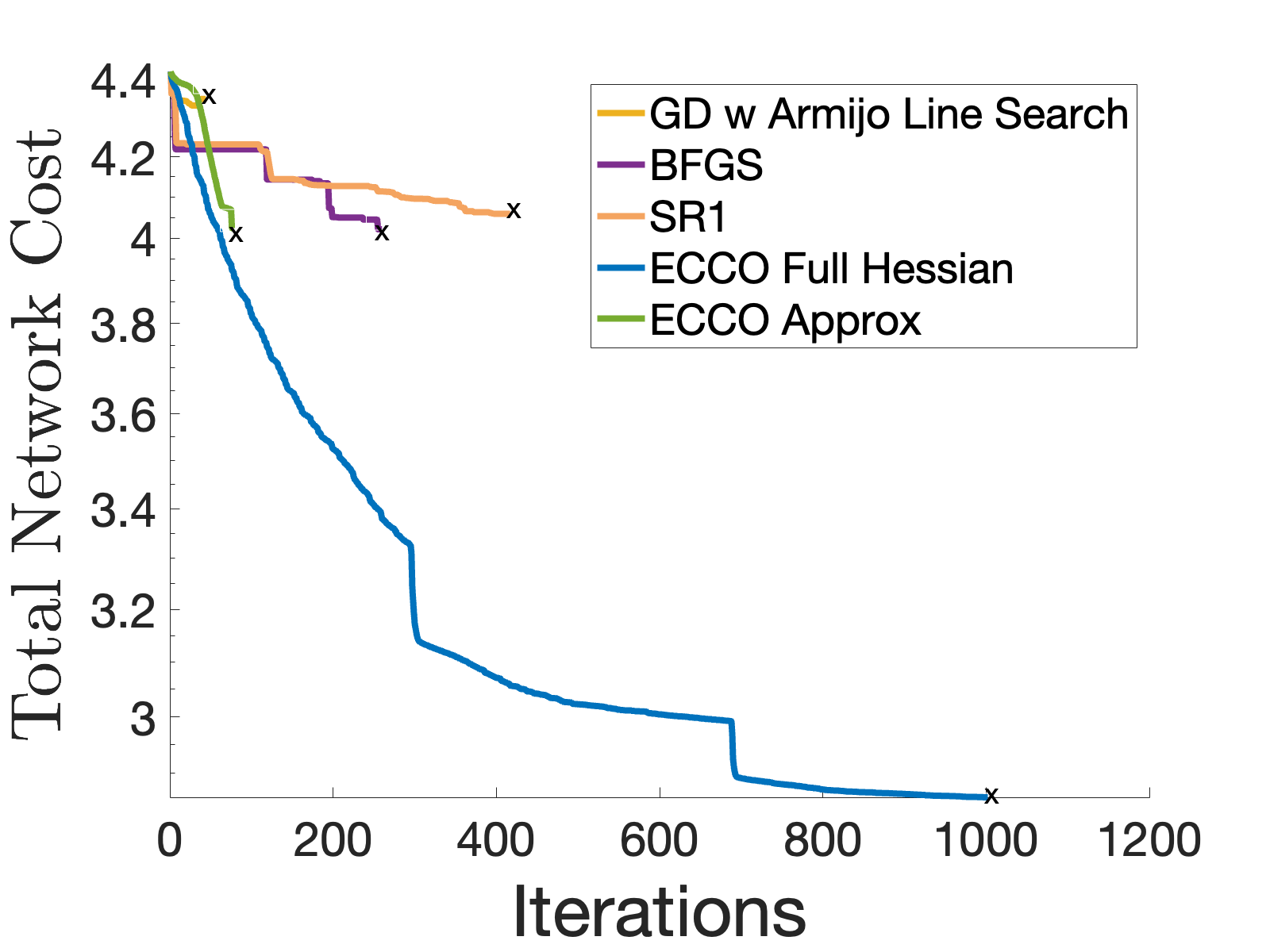}
    \caption{\small{Comparison of optimization methods on optimizing a post-contingency power grid. }}
    \label{fig:power}
\end{figure}
\else
\begin{figure}
    \centering
    \includegraphics[width=0.6\linewidth]{Figures/power_systems_results.png}
    \caption{\small{Comparison of optimization methods on optimizing a post-contingency power grid. }}
    \label{fig:power}
\end{figure}
\fi

\subsection{Training a Deep Neural Network for Classification} \label{sec:nn}
We next use ECCO on a machine learning application by training a neural network to classify MNIST data \cite{deng2012mnist}. See the Appendix \ref{more nn} for details on this experiment. We compared first order ECCO \eqref{z approx} to stochastic gradient descent (SGD) with a fixed step size, Adam, and RMSProp\footnote{A grid search was used to find hyperparameters for the comparison methods. See Appendix \ref{hyper search} for details} using a minibatch size of 1000. ECCO was adapted for a batch implementation in that every gradient was replaced with the sample gradient over the observed batch.


Figure \ref{fig:supp_training_loss_mnist} shows the results of training this experiment over 200 epochs; note that all four methods reach a similar training loss and have a similar test classification accuracy ($96\%$). See Appendix \ref{nn wall clock} for a discussion of wall clock times.

To test robustness, we perturbed the optimally tuned hyperparameter vectors within a normalized ball of radius $\varepsilon=0.1$. The hyperparameter values were sampled from a uniform distribution as $\tilde\theta \sim U(\max\{{\theta}^*-\frac{\varepsilon}{{\theta}^*}, \underline{\theta}\},\min\{{\theta}^*+\frac{\varepsilon}{{\theta}^*},\bar{\theta} \}) $, where ${\theta}^*$ was the optimal value as found by grid search. Note that this perturbation was bounded such that $\tilde\theta$ was always within the domain $[\underline{\theta} ,\bar{\theta}]$ of the hyperparameter. The network was then trained with $\tilde{\theta}$ and the test accuracy after 200 epochs was recorded in Figure \ref{fig:robust in paper}. We find that while Adam, SGD, and RMSProp are disrupted by the hyperparameter perturbation, ECCO reliably trained in every test. In fact, we found $\eta$ could be changed by three orders of magnitude and still yield classification accuracies over 95\%. See Appendix \ref{more robust} for more experiments. 

These experiments demonstrate that ECCO can achieve convergence behavior on par with Adam without comparable computation and data requirements to select hyperparameters, i.e. cross-validation. This means ECCO is better than Adam at problem generalization and  distribution shift.




\ifarxiv
\begin{figure}
    \centering
    \includegraphics[width=0.6\linewidth]{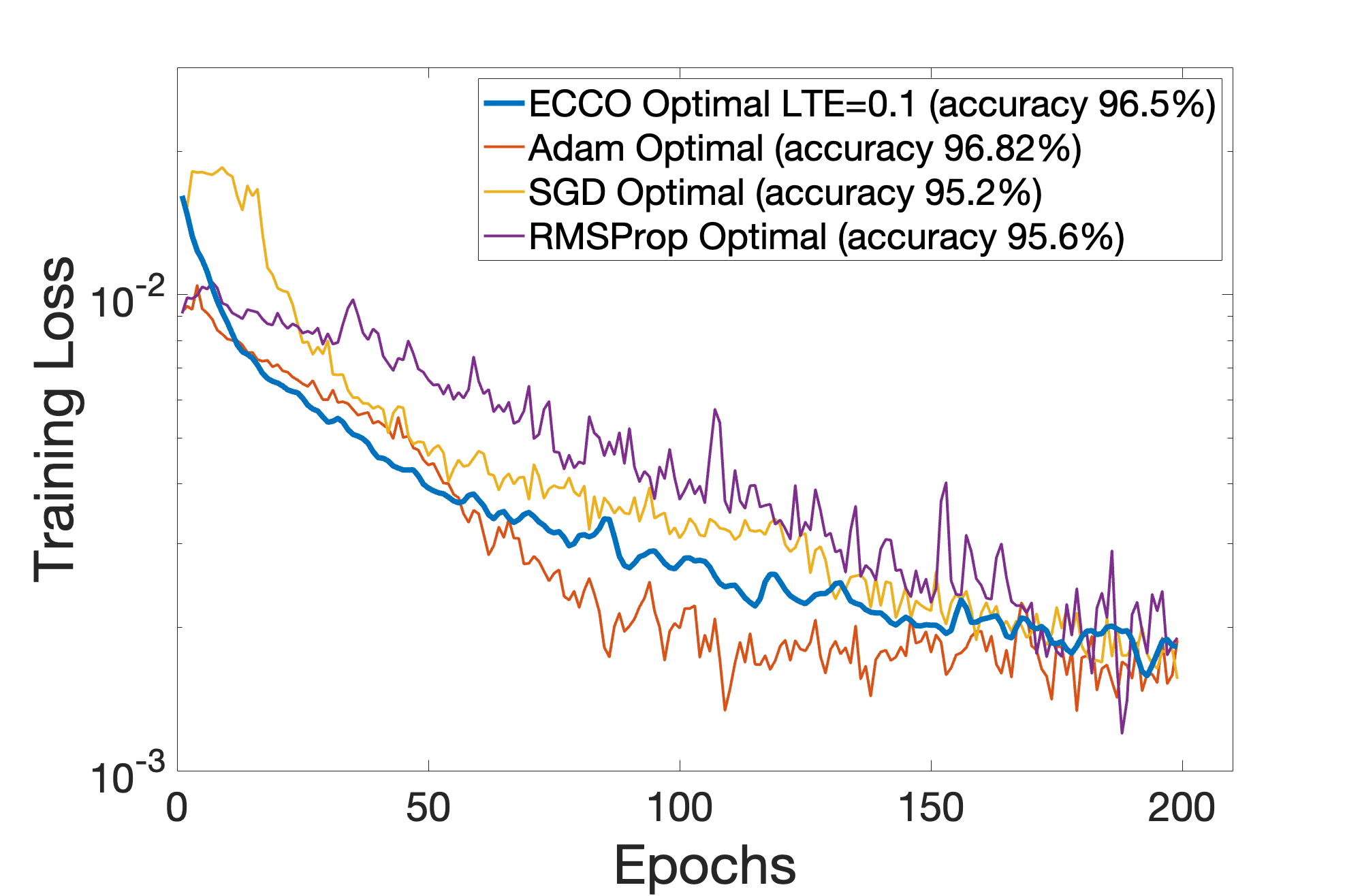}
    \caption{\small{Training of neural network for MNIST data using optimally tuned first order ECCO \eqref{z approx}, Adam, GD and RMSProp.}}
    \label{fig:supp_training_loss_mnist}
\end{figure}
\else
\begin{figure}
    \centering
    \includegraphics[width=0.75\linewidth]{Figures/AIStats Simulations/classification_error.png}
    \caption{\small{Training of neural network for MNIST data using optimally tuned first order ECCO \eqref{z approx}, Adam, GD and RMSProp.}}
    \label{fig:supp_training_loss_mnist}
\end{figure}
\fi

\ifarxiv
\begin{figure}
    \centering
    \includegraphics[width=0.5\linewidth]{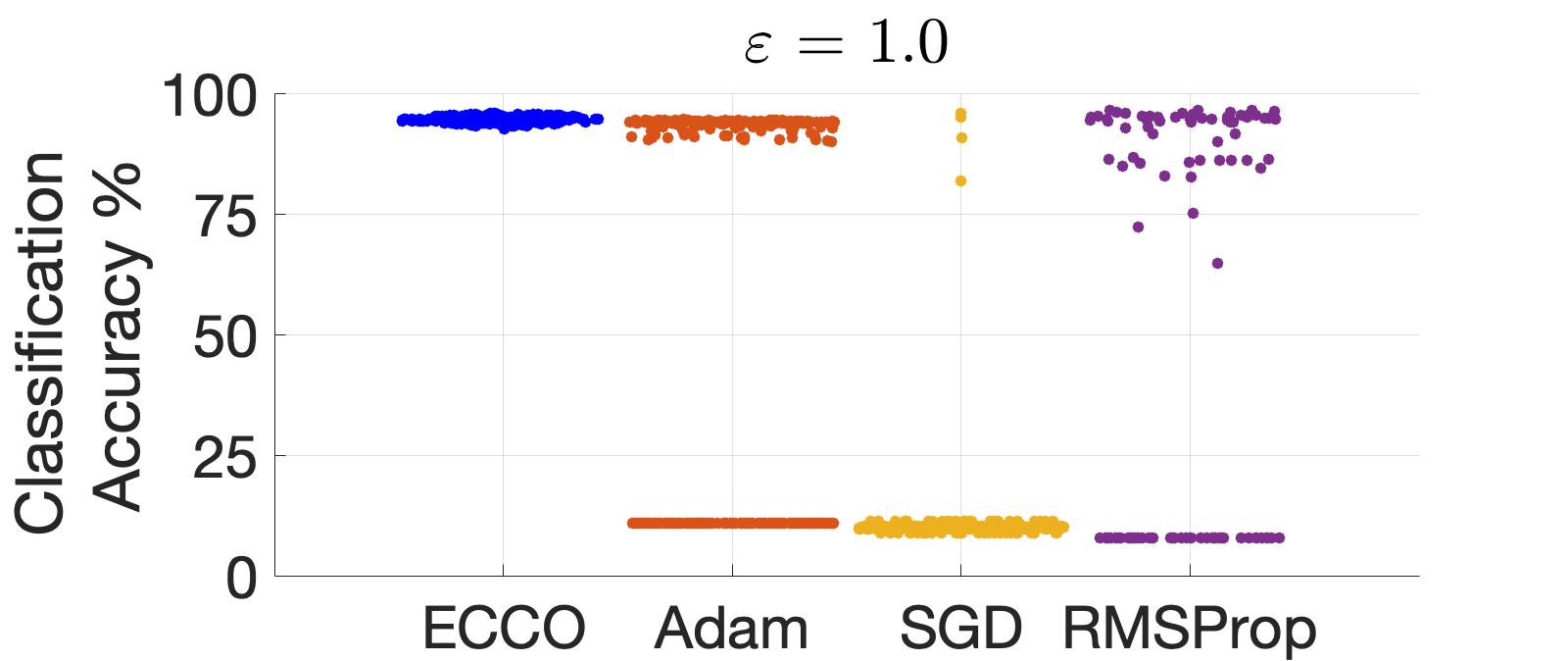}
    \caption{\small{Classification accuracy comparison using perturbed hyperparameters. Note that ECCO always successfully trained the neural network; the comparison methods were not reliable.}}
    \label{fig:robust in paper}
\end{figure}
\else
\begin{figure}
    \centering
    \includegraphics[width=0.75\linewidth]{Figures/classification_accuracy_var_0p1_short.png}
    \caption{\small{Classification accuracy comparison using perturbed hyperparameters. Note that ECCO always successfully trained the neural network; the comparison methods were not reliable.}}
    \label{fig:robust in paper}
\end{figure}
\fi

\section{Conclusion}
\label{sec:Conc}
In this paper we propose ECCO, a novel adaptive optimization algorithm that solves unconstrained scaled gradient flow problems and achieves fast convergence by controlling the optimization trajectory shape and the discretization step sizes. The proposed method leverages the equivalence with the transient response of an electrical circuit, allowing for the use of circuit theory to solve the optimization problem. Simulation results show not only performance comparable or exceeding state-of-the-art algorithms, such as Adam, but also superior robustness to initialization and hyperparameters variations. This is particularly relevant for neural networks, where, as shown in our study, comparable algorithms can show large fluctuations in accuracy as a result of small changes in the choice of hyperparameters. We believe that this overlooked connection between this important class of optimization problems and circuit theory is key for innovation opportunities, and we plan to extend this theory further, starting with applications to batch methods and discontinuous objective functions.

\pagebreak

\nocite{langley00}

\bibliography{example_paper}
\bibliographystyle{icml2023}

\newpage
\appendix
\onecolumn

\section{Proof of Theorem 1}
\label{main proof}

\begin{align}
    \frac{d}{dt}f(\xt)&=\langle \dfdxt,\dxdt(t)\rangle \\
    &= -\dfdxt^{\top} Z(\xt)^{-1}\dfdxt\\
    &\leq -d_1\Vert \dfdxt\Vert^2 \label{bounded dfdt}
\end{align}
Where \eqref{bounded dfdt} holds by Assumption \ref{z def}. 

Eq. \eqref{bounded dfdt} implies that the objective function is non-increasing along $\xt$.

For $t>0$ consider $\int_0^t \Vert\dfdxt\Vert^2dt$. By \eqref{bounded dfdt},
\begin{align}
    \int_0^t\Vert\dfdxt\Vert^2dt&\leq\frac{1}{d_1}(f(\x(0))-f(\x(t)))\\
    &\leq \frac{2R}{d_1} \label{bounded int norm}
\end{align}
Where \eqref{bounded int norm} holds by Assumption \ref{a1}.

Then for all $t>0$,
\begin{equation}
     \int_0^t\Vert\dfdxt\Vert^2dt\leq \frac{2R}{d_1}<\infty
\end{equation}

For all $\x$, $\y\in \R^n$,
\begin{align}
    \big \vert\Vert \dfdx \Vert^2 - \Vert \dfdy \Vert^2 \big \vert &\leq \big \vert \Vert \dfdx \Vert + \Vert \dfdy \Vert \big \vert \ \big \vert \Vert \dfdx \Vert -\Vert \dfdy \Vert \big \vert\\
    &\leq 2B \big \vert \Vert \dfdx \Vert - \Vert \dfdy \Vert \big \vert\\
    &\leq 2B \Vert \dfdx - \dfdy\Vert\\
    &\leq 2BL \Vert \x-\y \Vert
\end{align}

Therefore we can conclude that $\Vert \dfdx\Vert^2:\R^n\to\R$ is Lipschitz and hence uniformly continuous.

By Assumptions \ref{a4} and \ref{z def}, $\Vert \dxdt(t)\Vert$ is bounded and so $\xt$ is uniformly continuous in $t$.

Therefore, the composition $\Vert \dfdxt \Vert^2: \R_+\to\R$ is uniformly continuous in $t$. 

Since $\int_0^t\Vert\dfdxt\Vert^2dt<\infty$ and $\Vert\dfdxt\Vert^2$ is a uniformly continuous function of $t$, we conclude that $\lim_{t\to\infty}\Vert\dfdxt\Vert=0$.

\section{Boundedness of $Z^{-1}$}\label{zbounded}
\begin{lemma}
Assume \ref{a1}-\ref{a4} hold. Let $Z$ be defined as,
\begin{equation}
    Z_{ii}^{-1}(\xt) = \max\{\delta^{-1} [G(\xt) \nabla^2 f(\xt)\dfdxt]_i,1\}.
\end{equation}
For $\delta>0$, Assumption \ref{z def} will hold.
\end{lemma}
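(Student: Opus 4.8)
The plan is to verify each of the three requirements in Assumption~\ref{z def} separately: that $Z^{-1}$ is diagonal, that it is bounded below by a positive constant $d_1$, and that it is bounded above by a finite constant $d_2$. Diagonality is immediate from the construction, since $Z^{-1}(\xt)$ is defined entrywise with off-diagonal entries taken to be zero. The lower bound is equally direct: because each diagonal entry is a maximum with $1$, we have $Z_{ii}^{-1}(\xt)\geq 1$ for every $i$ and every $\xt$, so any $d_1\in(0,1)$ works.

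The substantive part is the upper bound. First I would expand the $i$th diagonal entry using the definition of $G$: since $G(\xt)$ is diagonal with $G_{ii}(\xt)=[\dfdxt]_i$, we have $[G(\xt)\nabla^2 f(\xt)\dfdxt]_i=[\dfdxt]_i\,[\nabla^2 f(\xt)\dfdxt]_i$. I would then bound each factor using Assumption~\ref{a4}. The first factor satisfies $|[\dfdxt]_i|\leq\Vert\dfdxt\Vert\leq B$. For the second factor, $|[\nabla^2 f(\xt)\dfdxt]_i|\leq\Vert\nabla^2 f(\xt)\dfdxt\Vert\leq\Vert\nabla^2 f(\xt)\Vert_{\mathrm{op}}\,\Vert\dfdxt\Vert$, where $\Vert\cdot\Vert_{\mathrm{op}}$ denotes the induced operator norm.

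The key step is converting the Lipschitz hypothesis on the gradient into an operator-norm bound on the Hessian. Since $f\in C^2$ by Assumption~\ref{a1} and $\nabla f$ is $L$-Lipschitz by Assumption~\ref{a4}, a standard difference-quotient argument gives $\Vert\nabla^2 f(\xt)\Vert_{\mathrm{op}}\leq L$ uniformly in $\xt$. Combining the bounds yields $|[G(\xt)\nabla^2 f(\xt)\dfdxt]_i|\leq B\cdot LB=LB^2$, hence $\delta^{-1}[G(\xt)\nabla^2 f(\xt)\dfdxt]_i\leq \delta^{-1}LB^2$, and therefore $Z_{ii}^{-1}(\xt)\leq\max\{\delta^{-1}LB^2,\,1\}=:M$. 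Taking any $d_2>M$ completes the argument.

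I do not anticipate a genuine obstacle, as every inequality reduces to the boundedness and Lipschitz constants supplied directly by Assumption~\ref{a4}. The only point requiring a short argument rather than a direct appeal to an assumption is the passage from a Lipschitz gradient to a bounded Hessian operator norm; this is routine but is the one place where the $C^2$-regularity of Assumption~\ref{a1} is used in addition to \ref{a4}.
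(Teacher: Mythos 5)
Your proof is correct, and for the upper bound it takes a genuinely different route from the paper. The paper bounds the \emph{sum} $\sum_{i=1}^n Z_{ii}^{-1}(\xt)$: it uses $\max\{a_i,1\}\leq a_i+1$ to reduce to the quadratic form $\delta^{-1}\dfdxt^{\top}\nabla^2 f(\xt)\dfdxt+n$, and then bounds this by $\delta^{-1}\lambda_{\max}\big(\nabla^2 f(\xt)\big)B^2+n$, appealing to \ref{a1} for existence and finiteness of the largest eigenvalue. You instead bound each entry separately, writing $[G(\xt)\nabla^2 f(\xt)\dfdxt]_i=[\dfdxt]_i\,[\nabla^2 f(\xt)\dfdxt]_i$ and combining $\Vert\dfdxt\Vert\leq B$ with the operator-norm bound $\Vert\nabla^2 f(\xt)\Vert_{\mathrm{op}}\leq L$ extracted from the Lipschitz hypothesis in \ref{a4}. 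Your version buys two things. First, your constant $M=\max\{\delta^{-1}LB^2,1\}$ is uniform in $\xt$, which is what Assumption \ref{z def} actually demands ($d_2>Z_{ii}^{-1}(\x)$ for \emph{all} $\x$); the paper's final bound still depends on $\xt$ through $\lambda_{\max}\big(\nabla^2 f(\xt)\big)$ and, as written, only establishes pointwise finiteness unless one adds exactly the observation $\lambda_{\max}\leq L$ that you make. Second, your inequality $\max\{\delta^{-1}[\cdot]_i,1\}\leq\max\{\delta^{-1}LB^2,1\}$ is valid regardless of the sign of the entries, whereas the paper's opening step $\max\{a_i,1\}\leq a_i+1$ implicitly assumes $a_i\geq 0$, which need not hold since $G(\xt)\nabla^2 f(\xt)\dfdxt$ can have negative components. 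So your argument is not merely a valid alternative; it quietly repairs two small gaps in the paper's own proof.
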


\textit{Proof:}
Note that the boundedness condition in Assumption \ref{z def} will hold if $0< \sum_{i=1}^n Z_{ii}(\xt)^{-1}<\infty$.

Lower bound:
\begin{align}
    Z_{ii}^{-1}(\xt) = \max\{\delta^{-1} [G(\xt) \nabla^2 f(\xt)\dfdxt]_i,1\}\geq 1>0.
\end{align}
Upper bound:
\begin{align}
     \sum_{i=1}^n Z_{ii}(\xt)^{-1} &\leq \mathbf{1}^{\top}\Big(\delta^{-1}G(\xt)\nabla^2f(\xt)\dfdxt\Big)+\sum_{i=1}^n 1\\
     &=\delta^{-1}\mathbf{1}^{\top}G(\xt)\nabla^2f(\xt)\dfdx +n\\
     &=\delta^{-1}\dfdxt^{\top}\nabla^2f(\xt)\dfdx +n\\
     &\leq \delta^{-1}\lambda_{\max}\big(\nabla^2 f(\xt)\big)\Vert \dfdxt\Vert^2+n\label{bound w lambda}\\
     &\leq \delta^{-1}\lambda_{\max}\big(\nabla^2 f(\xt)\big)B^2+n<\infty
\end{align}
Where $\lambda_{\max}\big(\nabla^2 f(\xt)\big)$ exists and is finite due to the existence of symmetric $\nabla^2 f$ \ref{a1}.

\section{Derivation of \eqref{z approx}}\label{z approx deriv}
Recall the approximation of $\frac{d}{dt}\dfdxt$ as in \eqref{ahat} and then use \eqref{time deriv} to find,
\begin{align}
    \ahat(\xt)&= \frac{\dfdxt - \nabla f(\x (t-\Delta t))}{\Delta t}\nonumber\\
    &\approx \frac{d}{dt}\dfdxt\\
    &=-\nabla^2 f(\xt) Z(\xt)^{-1} \dfdxt.
\end{align} 
Pre-multiply by $\delta^{-1}G(\x(t))$:
\begin{align}
    \delta^{-1}G(\xt)\ahat(\xt)\approx -\delta^{-1}G(\xt) \nabla^2 f(\xt) Z(\xt)^{-1} \dfdxt. \label{deriv 1}
\end{align}
As $Z$ is a diagonal matrix, the expression on the RHS of \eqref{deriv 1} can be rewritten as,
\begin{align}
    &-\delta^{-1}G(\xt) \nabla^2 f(\xt) Z(\xt)^{-1} \dfdxt\nonumber\\
    &=- \delta^{-1}G(\xt)\nabla^2 f(\xt)\begin{bmatrix}
    Z_{11}(\xt)^{-1}\frac{\partial f(\xt)}{\partial \x_1}\\ \vdots\\ Z_{nn}(\xt)^{-1}\frac{\partial f(\xt)}{\partial \x_n}
    \end{bmatrix} \label{deriv 2}\\
     &=-\delta^{-1}G(\xt)\nabla^2 f(\xt)\begin{bmatrix}
        \frac{\partial f(\xt)}{\partial x_1} & 0 &\dots & 0\\
        0  & \frac{\partial f(\xt)}{\partial x_2} & \ddots & 0\\
        \vdots & \vdots & \ddots & \vdots \\
        0 & 0 & \dots & \frac{\partial f(\xt)}{\partial x_n}
    \end{bmatrix}\begin{bmatrix}
        Z_{11}(\xt)^{-1}\\ \vdots \\ Z_{nn}(\xt)^{-1}
    \end{bmatrix}\\
    &=-\delta^{-1}G(\xt)\nabla^2 f(\xt)G(\xt)\z(\xt). \label{deriv 4}
\end{align}
In comparison, the objective is to compute:
\begin{align}
    \z(\xt)&= \delta^{-1}G(\xt) \nabla^2 f(\xt) \dfdxt \label{deriv 2}\\
    &= \delta^{-1}G(\xt) \nabla^2 f(\xt)G(\xt) \mathbf{1}\label{deriv 3}.
\end{align}
Where \eqref{deriv 2} follows from \eqref{final z} and \eqref{deriv 3} holds because by definition $G(\xt)_{ii} = [\dfdxt]_i$.

Define the shorthand notation for the expressions $\Ab\triangleq G(\xt)\nabla^2 f(\xt)G(\xt)$ and $\z\triangleq \z(\xt)$. Then \eqref{deriv 4} becomes $-\delta^{-1}\Ab\z$ and similarly \eqref{deriv 3} becomes $\delta^{-1}\Ab\mathbf{1}$. From \eqref{deriv 1} and \eqref{deriv 4},
\begin{equation}
    -\delta^{-1}\Ab \z \approx \delta^{-1}G(\xt)\ahat(\xt) \label{deriv 6}
\end{equation}
Use the definition of $\z$ in \eqref{deriv 3} and substitute into \eqref{deriv 6}:
\begin{align}
    \z &= \delta^{-1}\Ab \mathbf{1} \Rightarrow -\delta^{-1}\Ab \z = -\delta^{-2}\Ab\Ab\mathbf{1} \approx   \delta^{-1}G(\xt)\ahat(\xt)
    \end{align}
Pre-multiply by $-\mathbf{1}^{\top}$ and note that $\Ab = \Ab^{\top}$ :
    \begin{align}
    (\delta^{-1}\Ab\mathbf{1})^{\top}(\delta^{-1}\Ab\mathbf{1}) &\approx -\delta^{-1}\mathbf{1}^{\top}G(\xt)\ahat(\xt)\\
    \z^{\top}\z&\approx-\delta^{-1}\mathbf{1}^{\top}G(\xt)\ahat(\xt) \label{deriv 5}
\end{align}

One possible solution to \eqref{deriv 5} is thus,
\begin{align}
    &\z_i^2(\xt)\triangleq -[\delta^{-1}G(\xt)\ahat(\xt)]_i\\
    &\Rightarrow \z_i(\xt) = \sqrt{\max\{-[\delta^{-1}G(\xt)\ahat(\xt)]_i, 0\}}.
\end{align}

Similarly to the derivation of \eqref{z true}, we truncate $\z_i(\xt)\geq 1$ to finish the derivation.

\section{Boundedness of $\widehat{Z}^{-1}$} \label{zhat bounded}
\begin{lemma}
Assume \ref{a1}-\ref{a4} hold. Let $Z$ be defined as,
\begin{equation}
    \hat{Z}_{ii}(\xt)^{-1}=\max\{\sqrt{-\delta^{-1}[G(\xt)\ahat(\xt)]_i},1\}.
\end{equation}
For $\delta>0$, $0<\Delta t\leq t$, and finite $\x(0)$, Assumption \ref{z def} will hold.
\end{lemma}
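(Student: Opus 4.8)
The plan is to verify the two quantitative parts of Assumption \ref{z def} separately — the lower and the upper bound on the diagonal entries — since diagonality of $\widehat{Z}^{-1}$ is immediate from the component-wise definition \eqref{z approx}. Throughout I would use only the gradient bound from \ref{a4}, together with the hypotheses $0<\Delta t\leq t$ and finite $\x(0)$, whose sole role is to guarantee that the earlier trajectory point $\x(t-\Delta t)$ is well-defined, so that $\ahat(\xt)$ in \eqref{ahat} is a genuine finite difference of two (finite) gradients.

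For the lower bound, observe that by construction every diagonal entry satisfies $\max\{\sqrt{-\delta^{-1}[G(\xt)\ahat(\xt)]_i},1\}\geq 1$, so any $d_1\in(0,1)$ gives $\widehat{Z}_{ii}(\xt)^{-1}>d_1>0$ for all $i$ and all $\x$; this simultaneously secures positivity and hence invertibility of $\widehat{Z}$. The substantive step is the upper bound. I would expand the $i$-th component as
\[
[G(\xt)\ahat(\xt)]_i=\frac{\partial f(\xt)}{\partial \x_i}\cdot\frac{1}{\Delta t}\Big(\frac{\partial f(\xt)}{\partial \x_i}-\frac{\partial f(\x(t-\Delta t))}{\partial \x_i}\Big),
\]
and bound each factor with \ref{a4}: the first factor is at most $B$ in magnitude, and the bracketed difference is at most $2B$ by the triangle inequality, since each gradient has norm bounded by $B$. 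This yields
\[
-\delta^{-1}[G(\xt)\ahat(\xt)]_i\leq \big|\delta^{-1}[G(\xt)\ahat(\xt)]_i\big|\leq \frac{2B^2}{\delta\,\Delta t},
\]
so that $\widehat{Z}_{ii}(\xt)^{-1}\leq \max\{\sqrt{2B^2/(\delta\Delta t)},1\}$, which is finite for each fixed $\Delta t>0$. Taking $d_2$ to be any constant strictly larger than this maximum completes the verification.

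The step I would be most careful about is the interaction between the square root and the clamp. When the argument $-\delta^{-1}[G(\xt)\ahat(\xt)]_i$ is negative the square root is vacuous and the $\max$ simply returns $1$, so the entry still lies in $[1,\,\sqrt{2B^2/(\delta\Delta t)}]$; I would state this explicitly so the bound is unambiguous, which also reconciles the $\max\{\sqrt{\cdot},1\}$ form of the lemma with the $\sqrt{\max\{\cdot,1\}}$ form of \eqref{z approx}. I would also emphasize that using the plain triangle-inequality estimate $2B$, rather than a Lipschitz estimate $L\Vert\x(t)-\x(t-\Delta t)\Vert$, is deliberate: it keeps the bound free of $\Vert\dot{\x}\Vert$, which itself depends on $\widehat{Z}^{-1}$, and so avoids a circular argument. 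The resulting $d_2$ depends on the fixed step $\Delta t$, but is uniform in $i$ and $\x$, which is exactly what \ref{z def} demands.
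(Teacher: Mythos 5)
Your proof is correct, but it takes a genuinely different route from the paper's. You bound each diagonal entry separately, using only the gradient bound $B$ from \ref{a4} twice: $\bigl|[G(\xt)\ahat(\xt)]_i\bigr|\le B\cdot 2B/\Delta t$, giving the uniform per-entry bound $\max\{\sqrt{2B^2/(\delta\Delta t)},1\}$. The paper instead bounds the \emph{sum} $\sum_i \widehat{Z}_{ii}(\xt)^{-1}$: it squares the clamp to remove the square root (using $\max\{\sqrt{a},1\}^2=\max\{a,1\}$), rewrites the sum as the inner product $\delta^{-1}(\Delta t)^{-1}\nabla f(\xt)^{\top}\bigl(\nabla f(\x(t-\Delta t))-\nabla f(\xt)\bigr)+n$, and then applies Cauchy--Schwarz, the Lipschitz property ($L$) and the gradient bound ($B$) to reach $\delta^{-1}(\Delta t)^{-1}BL\Vert\x(t-\Delta t)-\x(t)\Vert+n$, finally controlling the displacement via finiteness of $\x(0)$ and coercivity \ref{a2} through the quantity $\max_{\x^*\in S}\Vert\x(0)-\x^*\Vert$. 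Your argument buys simplicity and economy of hypotheses: it needs neither the Lipschitz constant, nor Cauchy--Schwarz, nor any claim about boundedness of the trajectory --- the last being the weakest link in the paper's chain, since bounding $\Vert\x(t-\Delta t)-\x(t)\Vert$ by a quantity involving only $\x(0)$ and $S$ requires an unstated argument that the flow remains in a compact sublevel set; you also correctly identify and avoid the circularity that a velocity-based displacement bound would introduce. You additionally sidestep a sign subtlety in the paper's step $\sum_i\max\{a_i,1\}\le\sum_i a_i+n$, which as written is only valid when every $a_i\ge 0$, whereas your component-wise absolute-value bound needs no such caveat. What the paper's route buys in exchange is a bound whose leading term scales with the actual gradient increment $\Vert\nabla f(\x(t-\Delta t))-\nabla f(\xt)\Vert$ rather than the worst case $2B$, which is tighter along a converging trajectory; both bounds, however, carry the same $(\Delta t)^{-1}$ factor, so neither is uniform as $\Delta t\to 0$, and your $\Delta t$-dependent $d_2$ is no weaker than what the paper actually establishes.
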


\textit{Proof:} Note that the boundedness condition in Assumption \ref{z def} will hold if $0< \sum_{i=1}^n \hat{Z}_{ii}(\xt)^{-1}<\infty$.

Lower bound:
\begin{align}
    \hat{Z}_{ii}(\xt)^{-1}=\max\{\sqrt{-\delta^{-1}[G(\xt)\ahat(\xt)]_i},1\}\geq 1 >0.
\end{align}
Upper bound:
\begin{align}
    \hat{Z}_{ii}(\xt)^{-1} &\leq \max\{\sqrt{-\delta^{-1}[G(\xt)\ahat(\xt)]_i},1\}^2\\
    \sum_{i=1}^n \hat{Z}_{ii}(\xt)^{-1} &\leq  \sum_{i=1}^n -\delta^{-1}[G(\xt)\ahat(\xt)]_i+\sum_{i=1}^n 1\\
    &= \delta^{-1}(\Delta t)^{-1}\Big(\dfdxt^{\top} \nabla f(\x(t-\Delta t)) -\dfdxt^{\top}\dfdxt \Big)+n\\
    &=\delta^{-1}(\Delta t)^{-1}\dfdxt^{\top}\Big( \nabla f(\x(t-\Delta t)) -\dfdxt \Big)+n\\
    &\leq\delta^{-1}(\Delta t)^{-1}\left\vert\dfdxt^{\top}\Big(  \nabla f(\x(t-\Delta t)) -\dfdxt \Big)\right\vert+n\\
    &\leq \delta^{-1}(\Delta t)^{-1}\Vert \dfdxt\Vert \Vert \nabla f(\x(t-\Delta t)) -\dfdxt\Vert+n\\
    &\leq \delta^{-1}(\Delta t)^{-1} BL\Vert \x(t-\Delta t)-\x(t)\Vert+n\\
    &\leq \delta^{-1}(\Delta t)^{-1} BL\max_{\x^*\in S}\Vert \x(0)-\x^*\Vert+n<  \infty
\end{align}
Where the last line holds because $\Vert \x(0)-\x^*\Vert$ is bounded as $\x(0)$ is given to be finite and $\x^*$ is finite due to \ref{a2}.

\section{Computation Complexity}
\subsection{Full Hessian $Z$} \label{complexity full}
Given the gradient and Hessian, the per-iteration computation complexity to evaluate $Z$ according to \eqref{z true} is $\mathcal{O}(n^2)$. This can be verified as each element $i$ requires $\mathcal{O}(n)$ computations:
\begin{equation}
    Z_{ii}^{-1}(\xt) = \max\left\{\delta^{-1} \frac{\partial f(\xt)}{\partial \x_i(t)}\sum_{j=1}^n \frac{\partial^2 f(\xt)}{\partial \x_i \partial \x_j}\frac{\partial f(\xt)}{\partial \x_j}, 1\right\}
\end{equation}

\subsection{Approximate $\widehat{Z}$} \label{complexity approx}
Given the current and previous gradient, the per-iteration computation complexity to evaluate $\widehat{Z}$ according to \eqref{z approx} is $\mathcal{O}(n)$. This can be verified as each element $i$ requires $\mathcal{O}(1)$ computations:
\begin{equation}
    Z_{ii}^{-1}(\xt) = \max\left\{-\delta^{-1}\frac{1}{\Delta t} \frac{\partial f(\xt)}{\partial \x_i(t)} \left(\frac{\partial f(\xt)}{\partial \x_i}-\frac{\partial f(\x(t-\Delta t))}{\partial \x_i}\right), 1\right\}
\end{equation}

\section{More Details on Test Functions Experiment}\label{all toy graphs}

\subsection{All Graphs}
All the test function experiments can be found in Figure \ref{fig:test funcs}. The test functions considered were:
\begin{itemize}
    \item \textbf{Rosenbrock} \cite{testfunc} A convex, uni-modal, valley-shaped function.
    
    \item \textbf{Himmelblau} \cite{0070289212} A nonconvex, multi-modal, bowl-shaped function.

    \item \textbf{Booth} \cite{testfunc} A convex, uni-modal, plate-shaped function.

    \item \textbf{Three Hump} \cite{testfunc} A nonconvex, multi-modal, valley-shaped function.

    \item \textbf{Extended Wood} \cite{andrei2008unconstrained} A nonconvex function that scales for high dimensions.

    \item \textbf{Rastrigin} \cite{testfunc} A nonconvex, highly multi-modal function.
\end{itemize}

The compared methods were:
\begin{itemize}
    \item \textbf{Full Hessian ECCO} \eqref{z true}

    \item \textbf{Approximate ECCO} \eqref{z approx}

    \item \textbf{Gradient Descent} with step sizes chosen by line search and the Armijo condition

    \item \textbf{Gradient Descent} with step sizes chosen by EATSS, i.e. uncontrolled gradient flow discretized with FE+EATSS.

    \item \textbf{Adam} \cite{kingma2014adam} equipped with the full gradient 

\end{itemize}
Each experiment was terminated when convergence in the objective function was detected, i.e. when the algorithms returned $\x_k$ such that $\Vert f(\x_k)-f(\x_{k+1})\Vert <1e-4$. 

In all experiments, ECCO used $\delta=1$, $\alpha = 0.9$, $\beta=1.1$, $\eta=0.1$. The comparison methods used a grid search to optimize the hyperparameters. For Adam, we searched for $\beta_1$ and $\beta_2$ within $[0.7, 1]$ in increments of $0.01$. For both Adam and GD, we searched for an optimal learning rate within $[0.001,\dots, 1]$ in increments of $0.005$. The Armijo parameter was within $\{1e-5, 1e-4, 1e-3, 1e-2\}$.

These results empirically show that the approximation of the optimization trajectory does not induce much error into ECCO's performance, meaning that \eqref{z approx} is a reasonable substitution for \eqref{z true}. Second, FE+EATSS often outperforms FE with line search and the Armijo condition. Finally, Adam often performs badly on these test functions, which is expected as Adam excels as batch method in machine learning applications; we include it here for context.

\begin{figure}[h]
\centering
\subfigure[Rosenbrock $\x(0)=(-2,-2)$]{\label{fig:a}\includegraphics[width=0.3\textwidth]{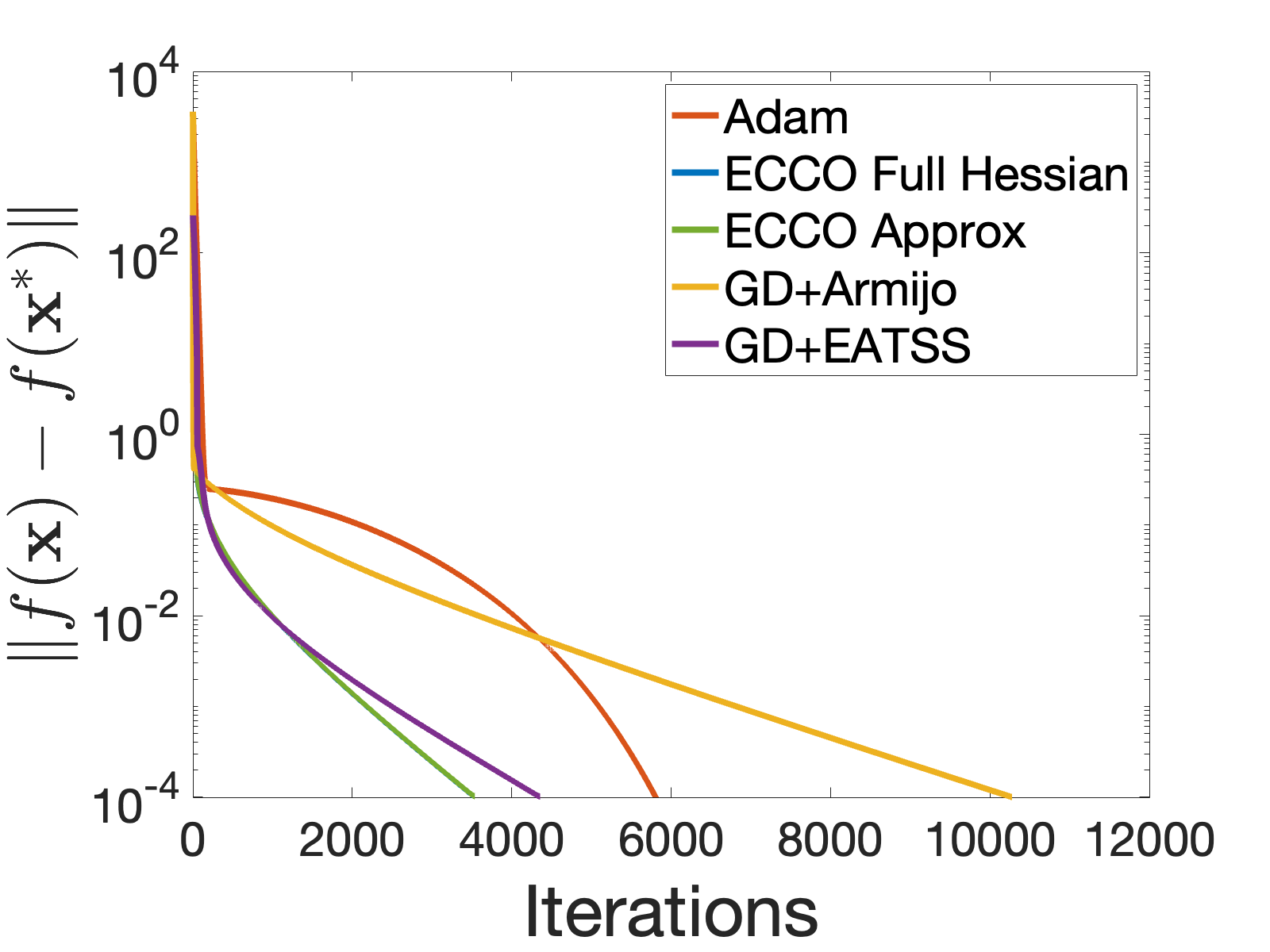}}
\subfigure[Rosenbrock $\x(0)=(0,0)$]{\label{fig:b}\includegraphics[width=0.3\textwidth]{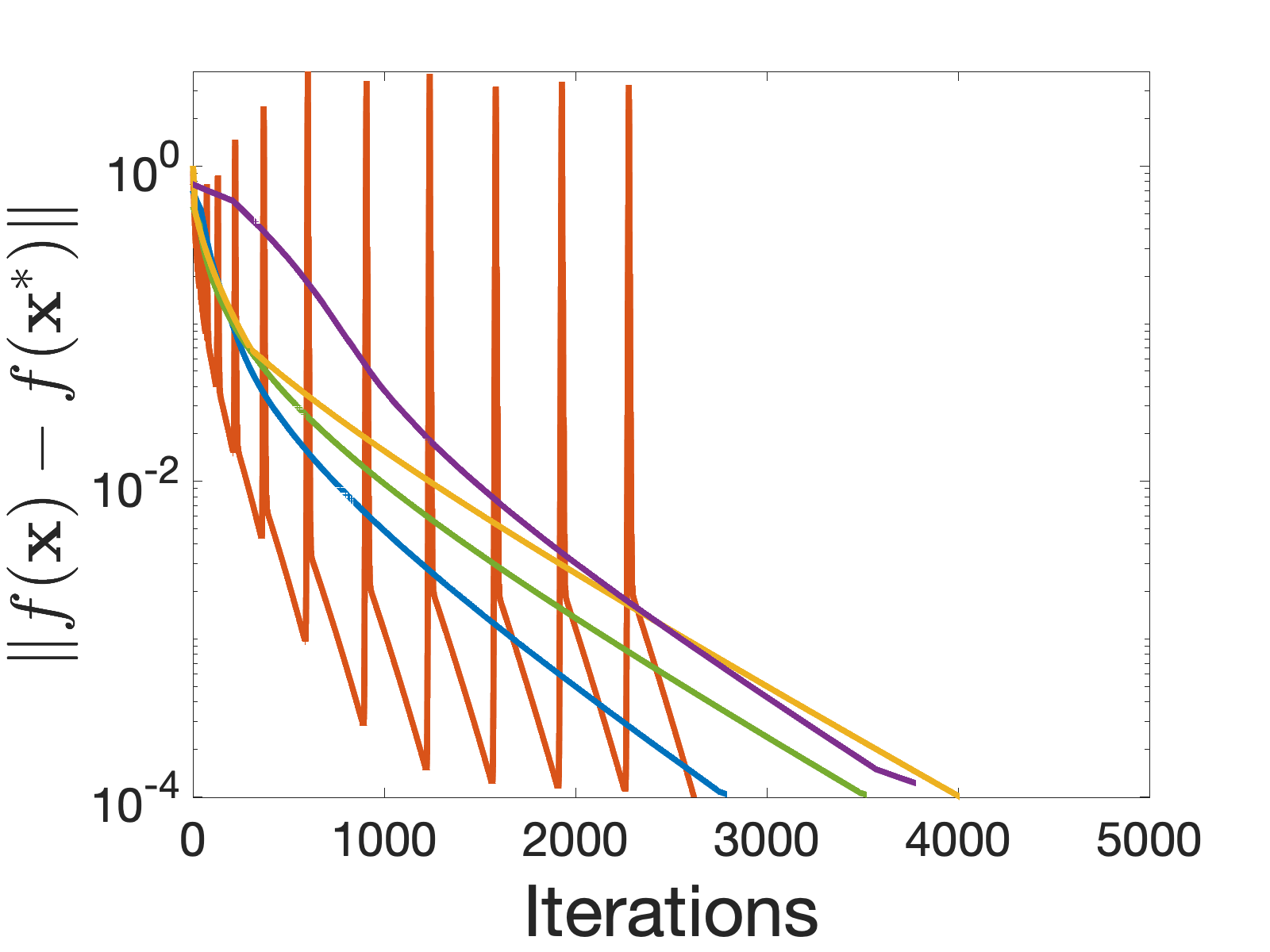}}
\subfigure[Rosenbrock $\x(0)=(-5,-5)$]{\label{fig:c}\includegraphics[width=0.3\textwidth]{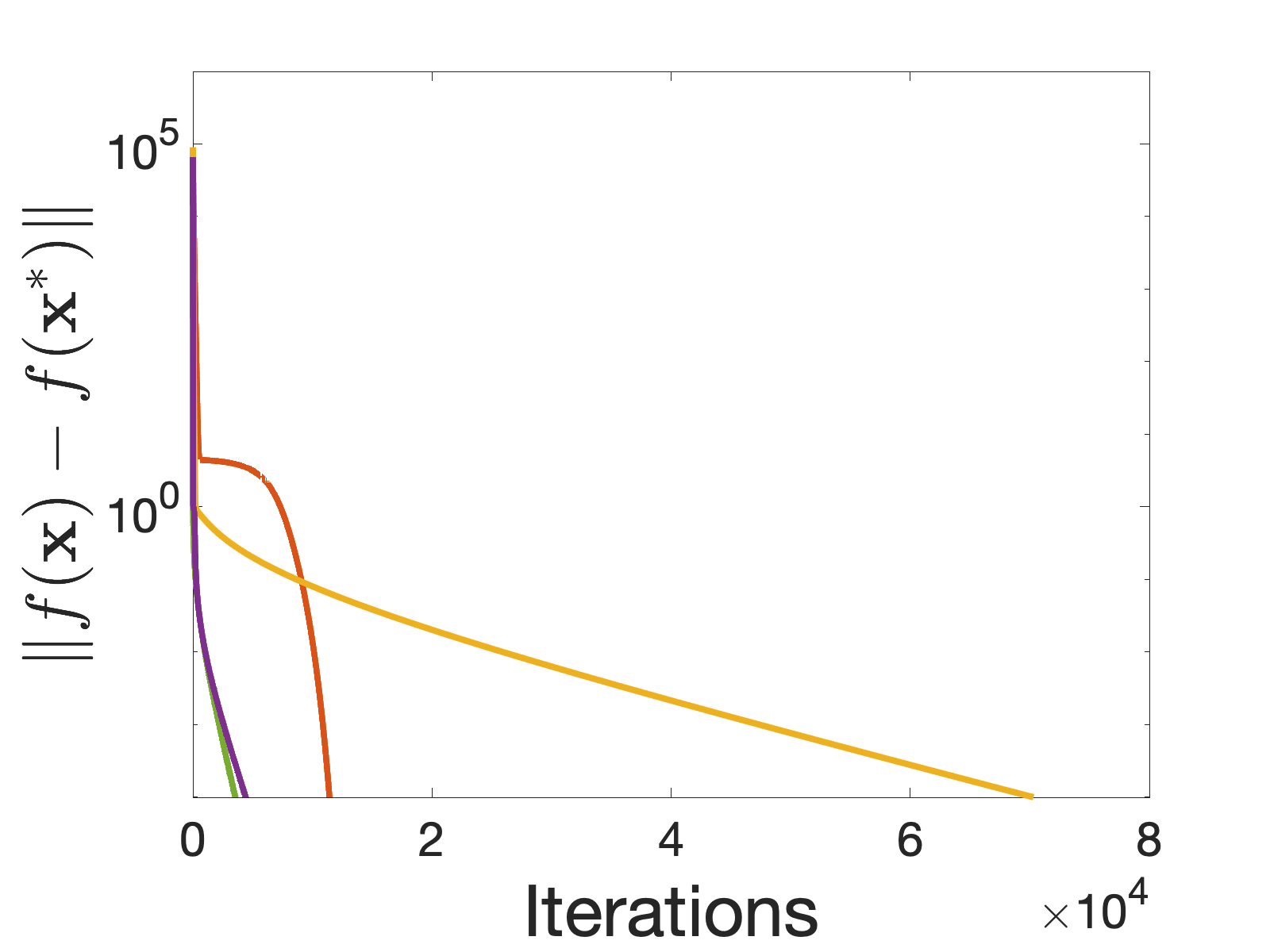}}

\subfigure[Himmelblau $\x(0)=(1,1)$]{\label{fig:d}\includegraphics[width=0.3\textwidth]{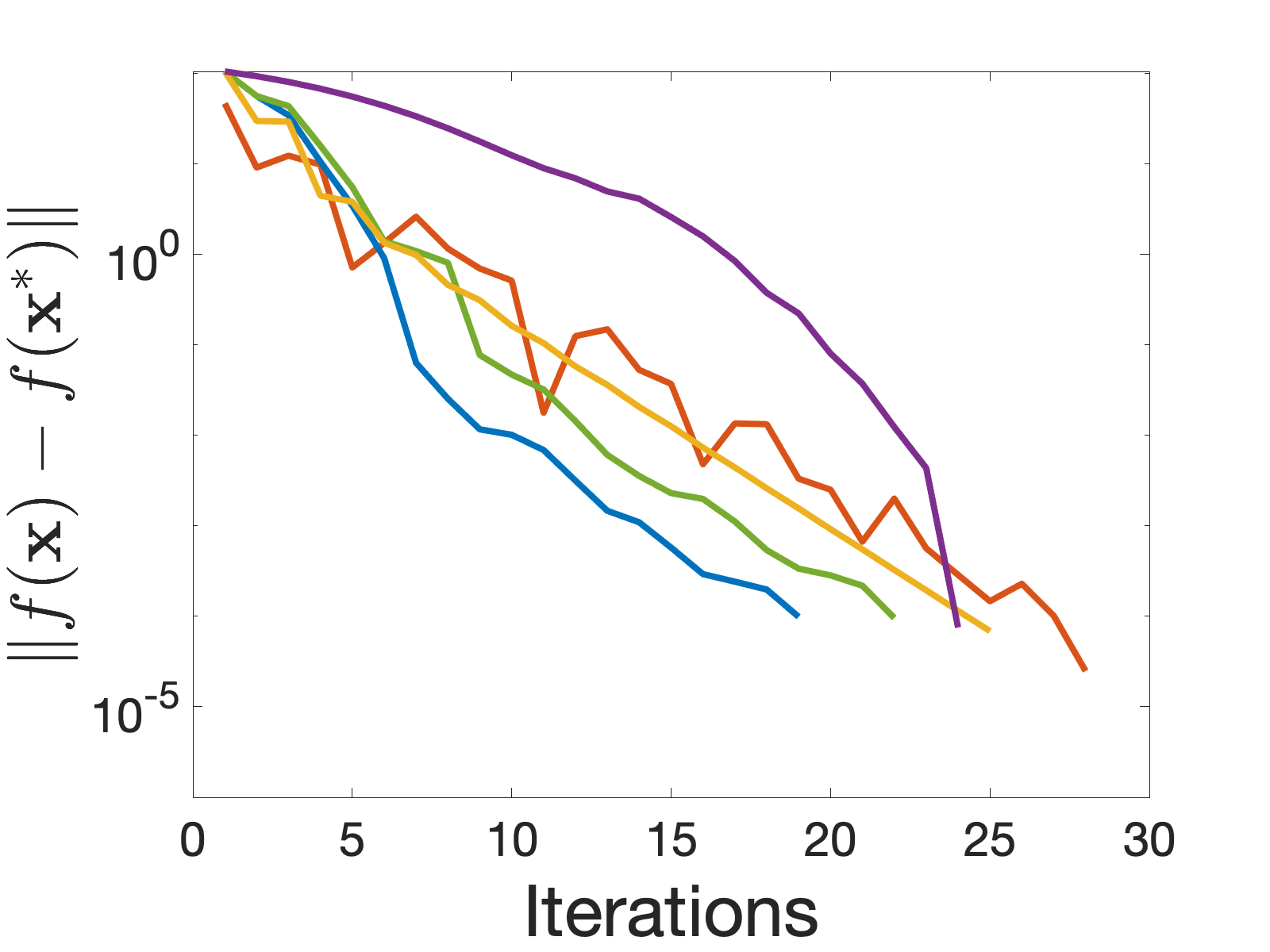}}
\subfigure[Himmelblau $\x(0)=(20,20)$]{\label{fig:e}\includegraphics[width=0.3\textwidth]{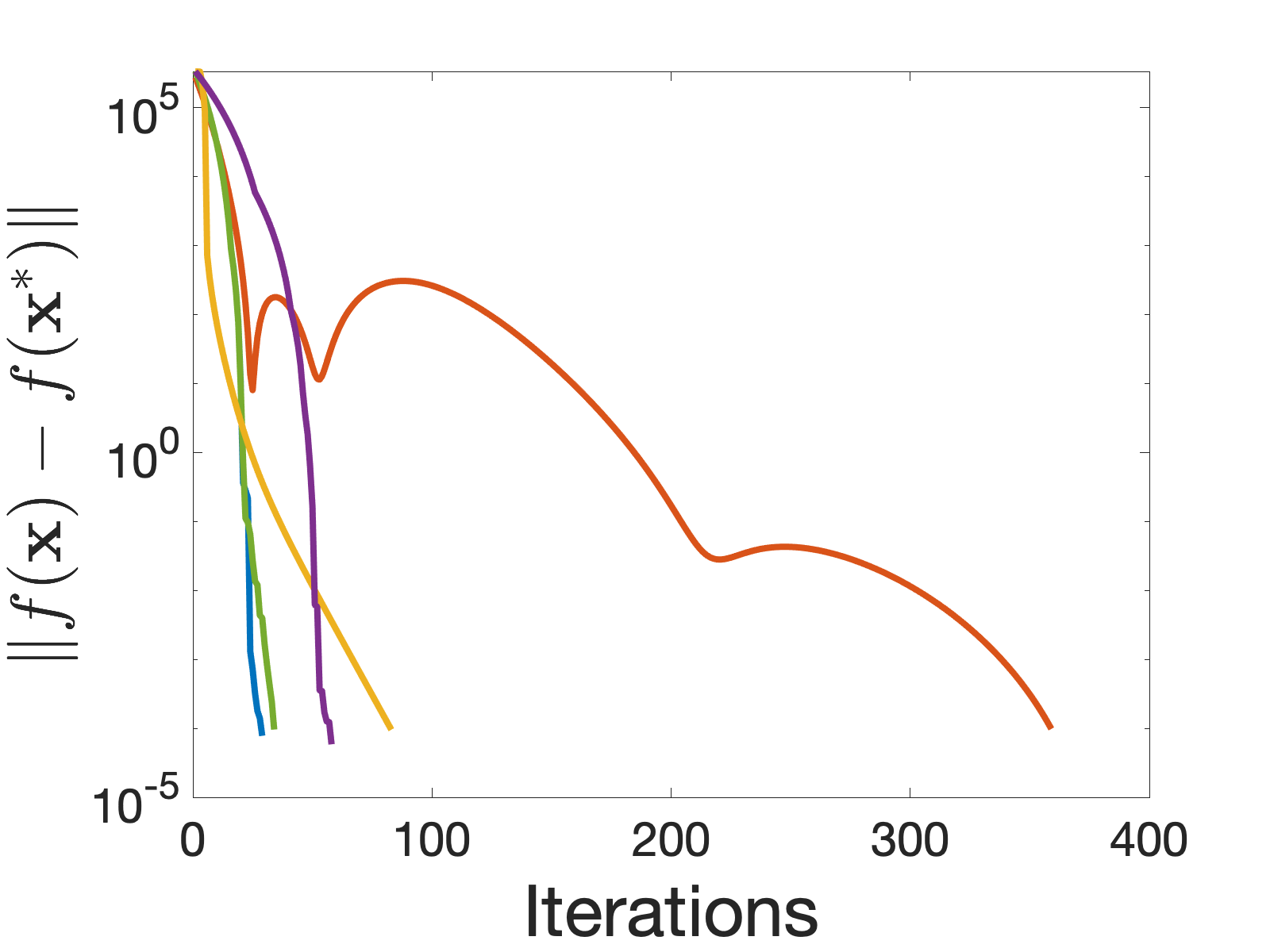}}
\subfigure[Himmelblau $\x(0)=(-5,-5)$]{\label{fig:f}\includegraphics[width=0.3\textwidth]{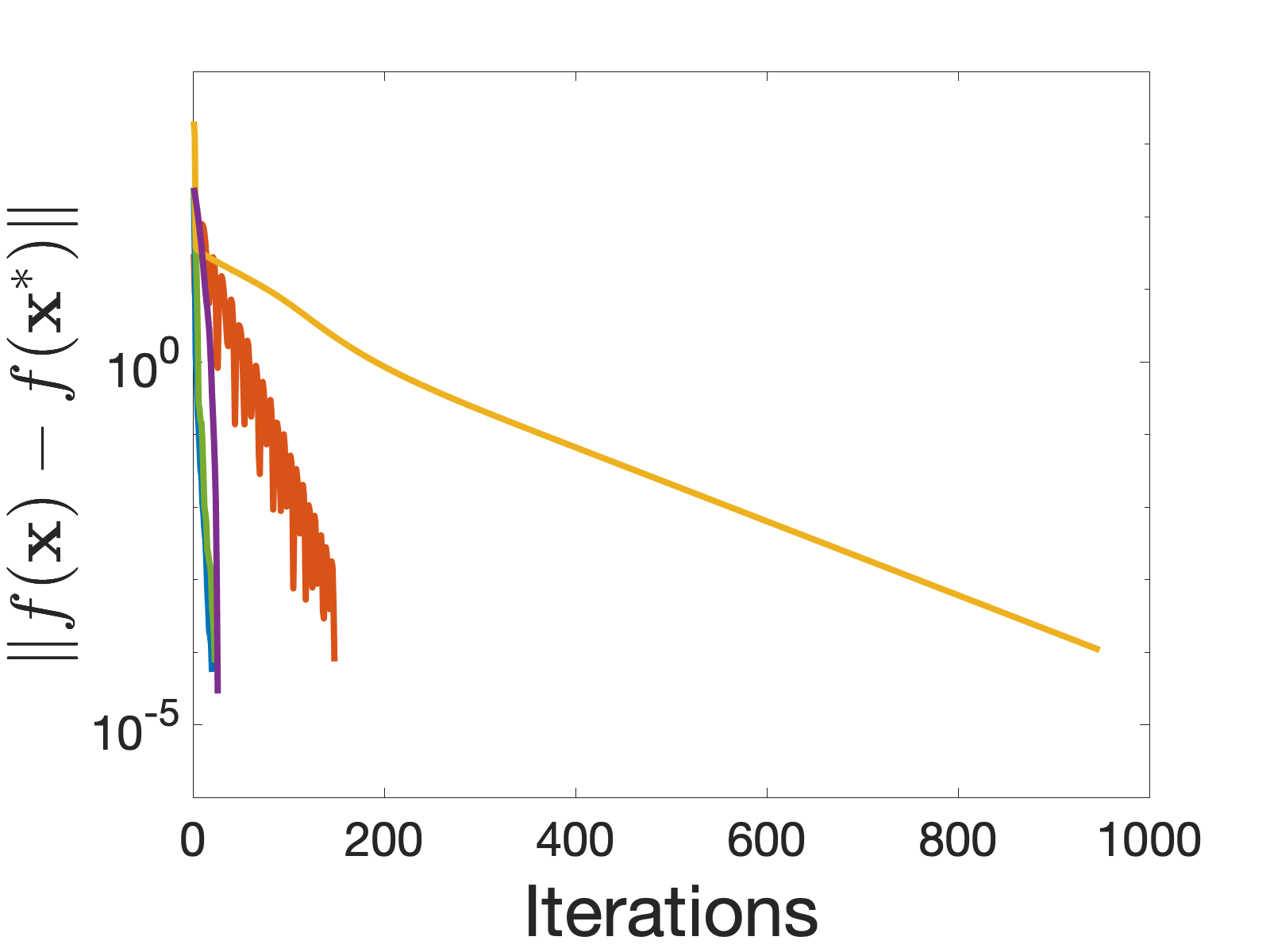}}

\subfigure[Booth $\x(0)=(5,5)$]{\label{fig:g}\includegraphics[width=0.3\textwidth]{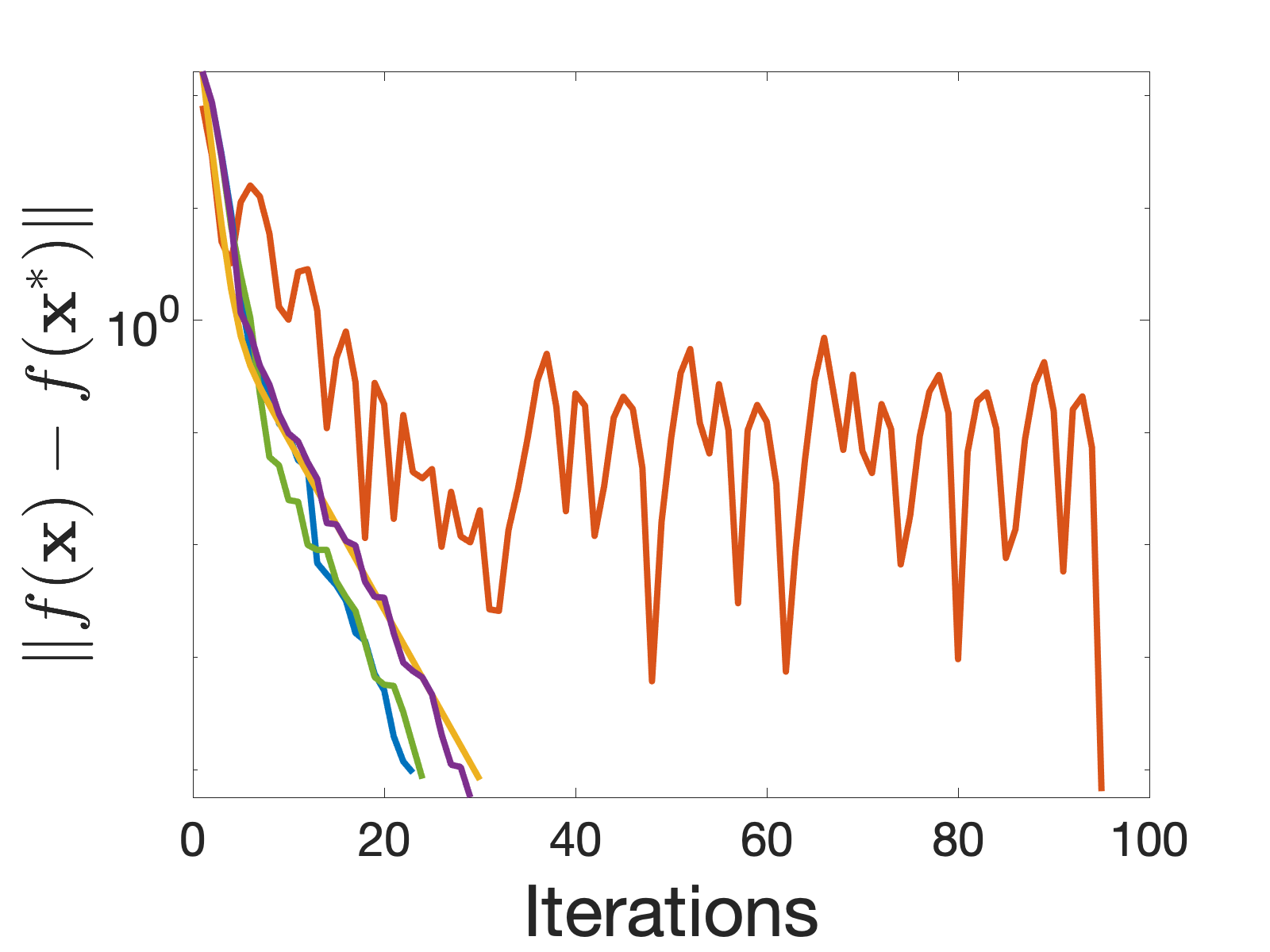}}
\subfigure[Booth $\x(0)=(5,-5)$]{\label{fig:h}\includegraphics[width=0.3\textwidth]{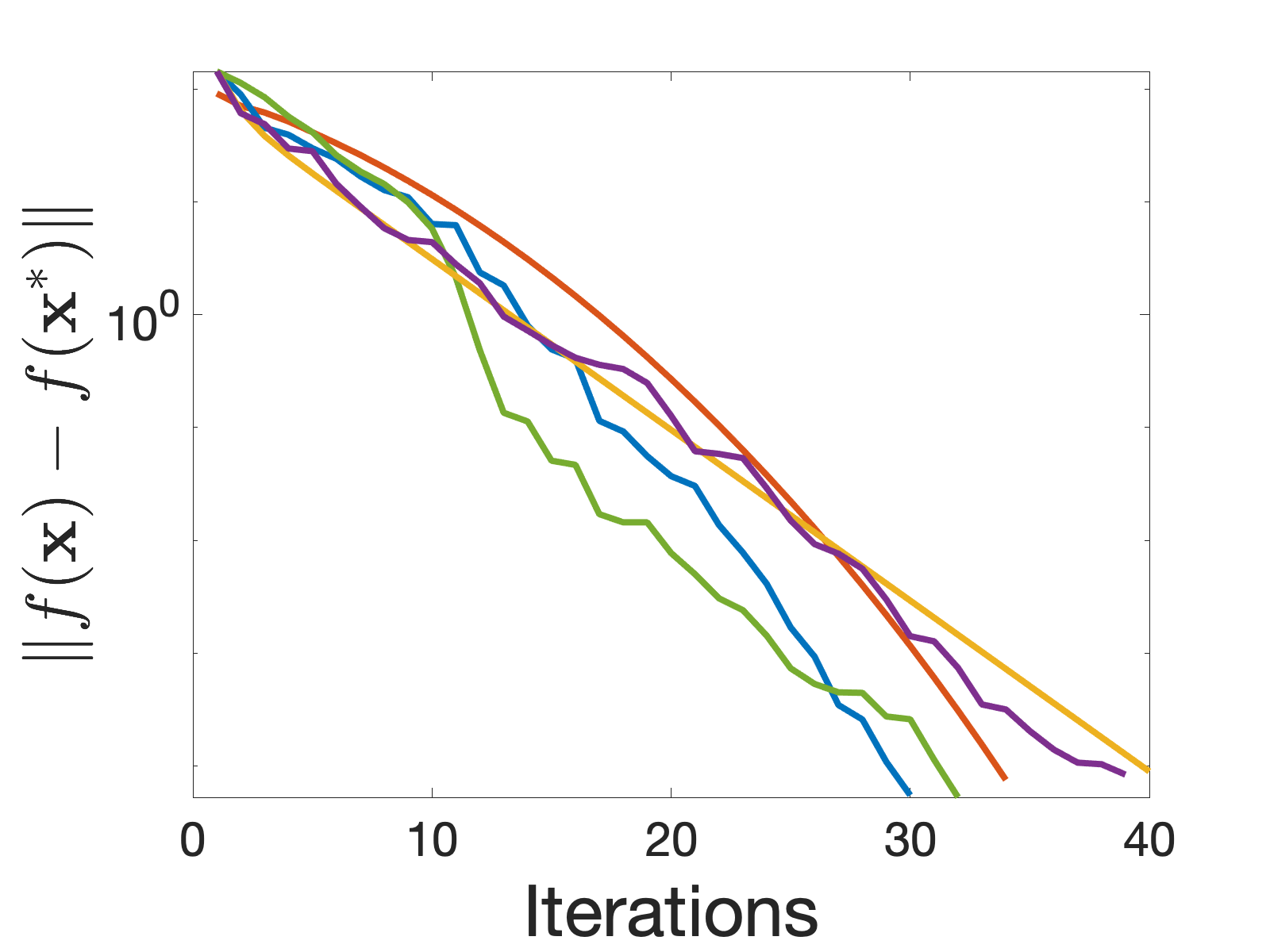}}
\subfigure[Booth $\x(0)=(-2,-2)$]{\label{fig:i}\includegraphics[width=0.3\textwidth]{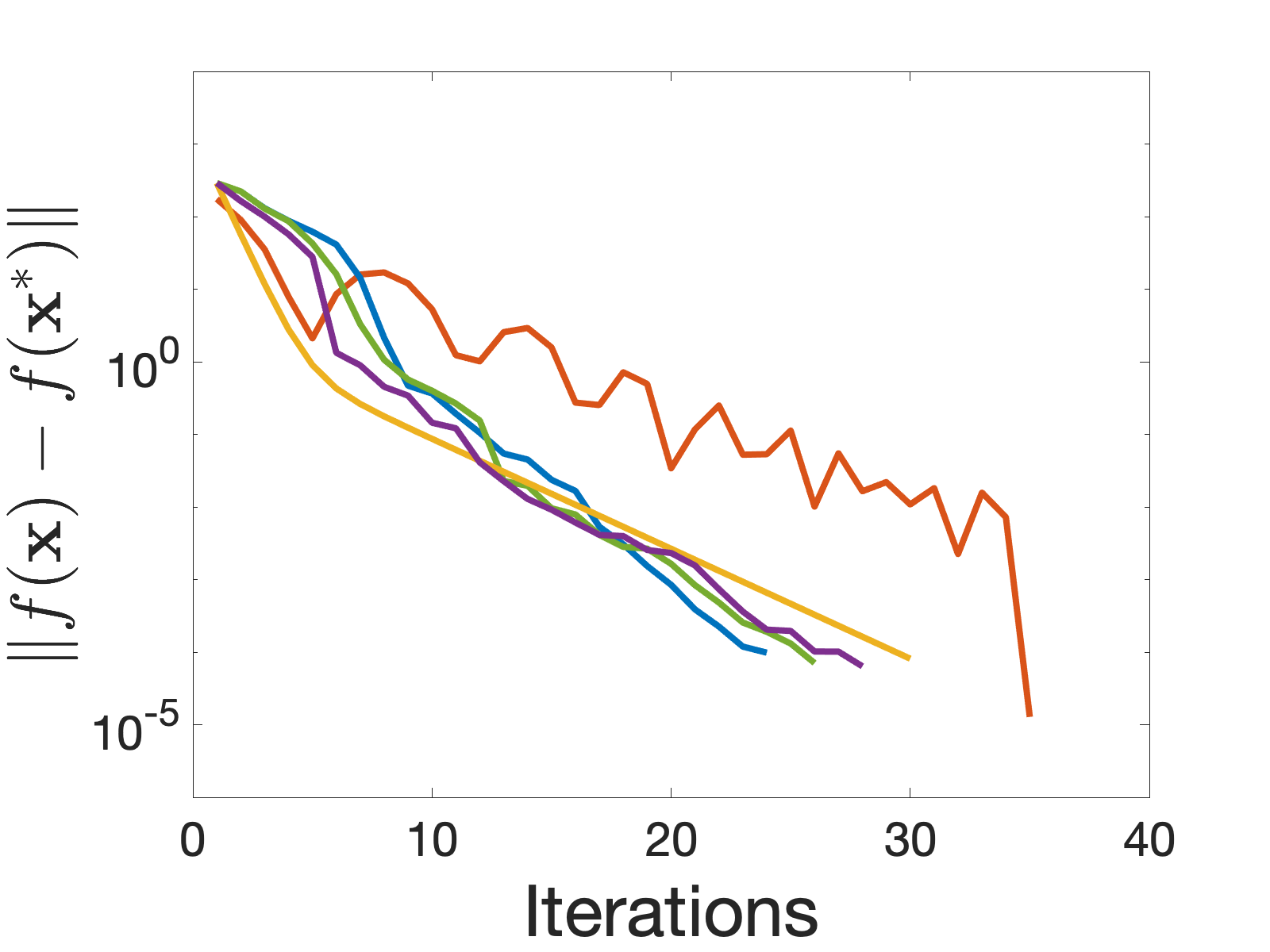}}

\subfigure[Three Hump $\x(0)=(1,1)$]{\label{fig:j}\includegraphics[width=0.3\textwidth]{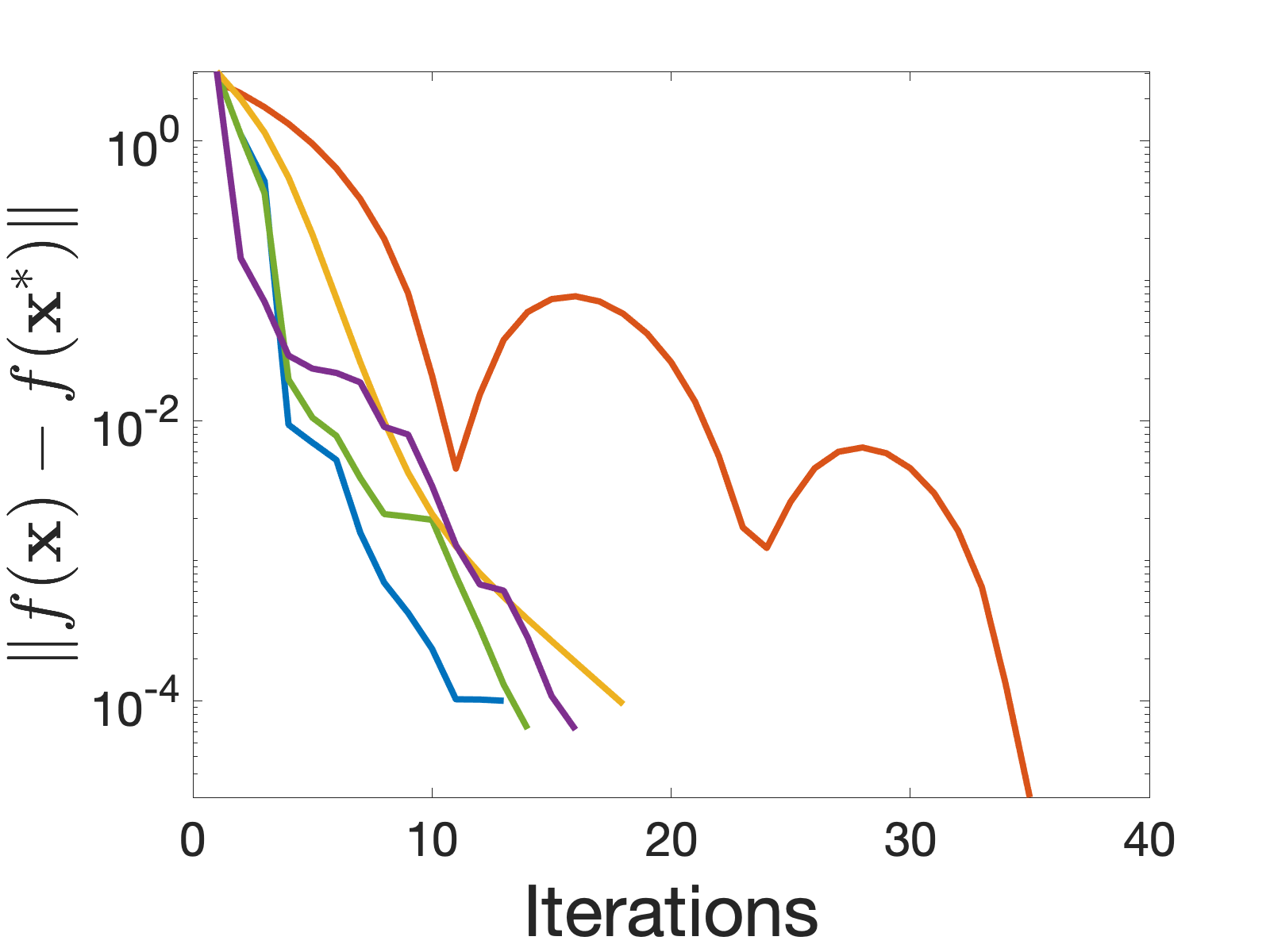}}
\subfigure[Three Hump $\x(0)=(0,-1)$]{\label{fig:k}\includegraphics[width=0.3\textwidth]{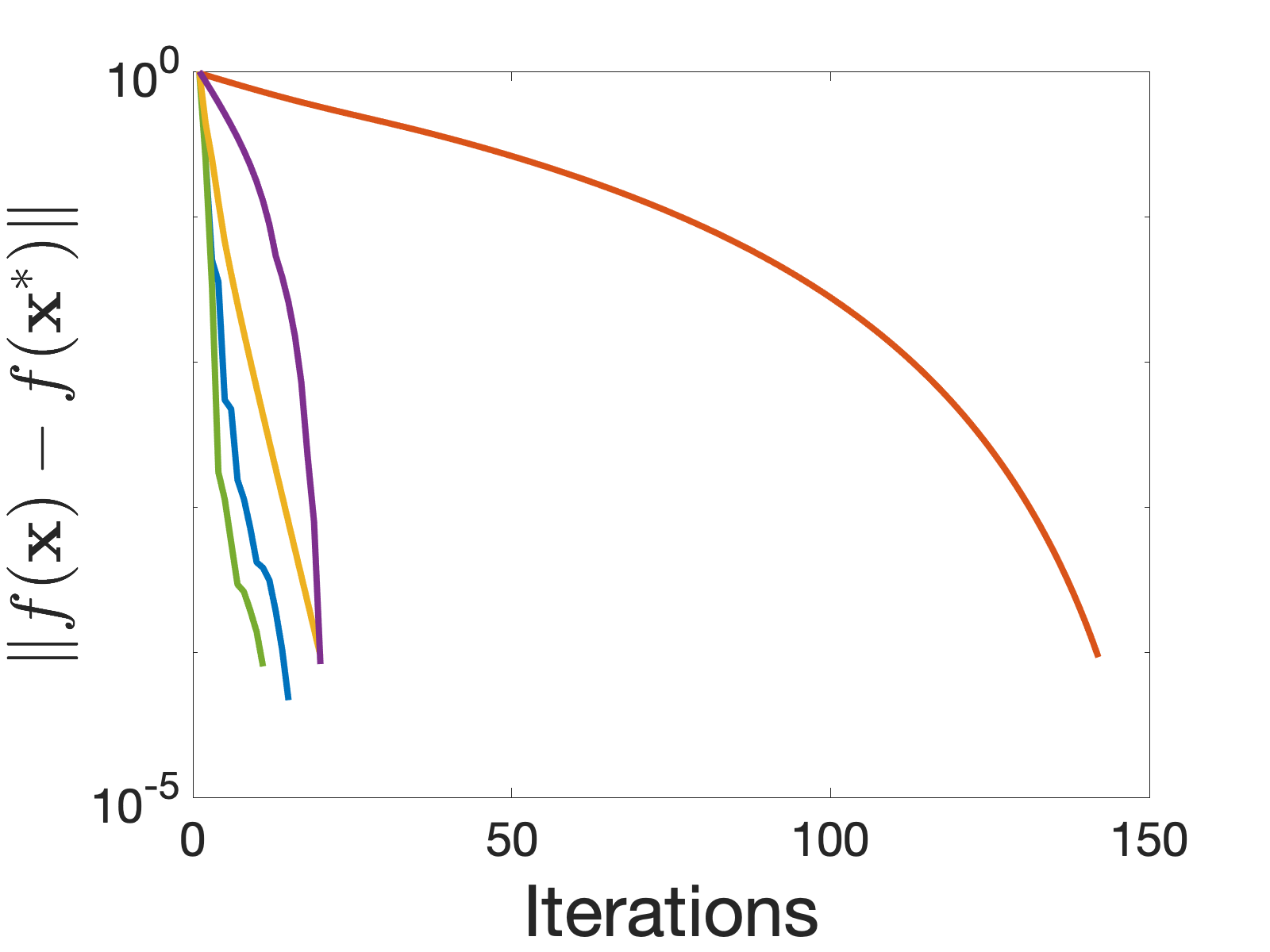}}
\subfigure[Three Hump $\x(0)=(-1,-1)$]{\label{fig:l}\includegraphics[width=0.3\textwidth]{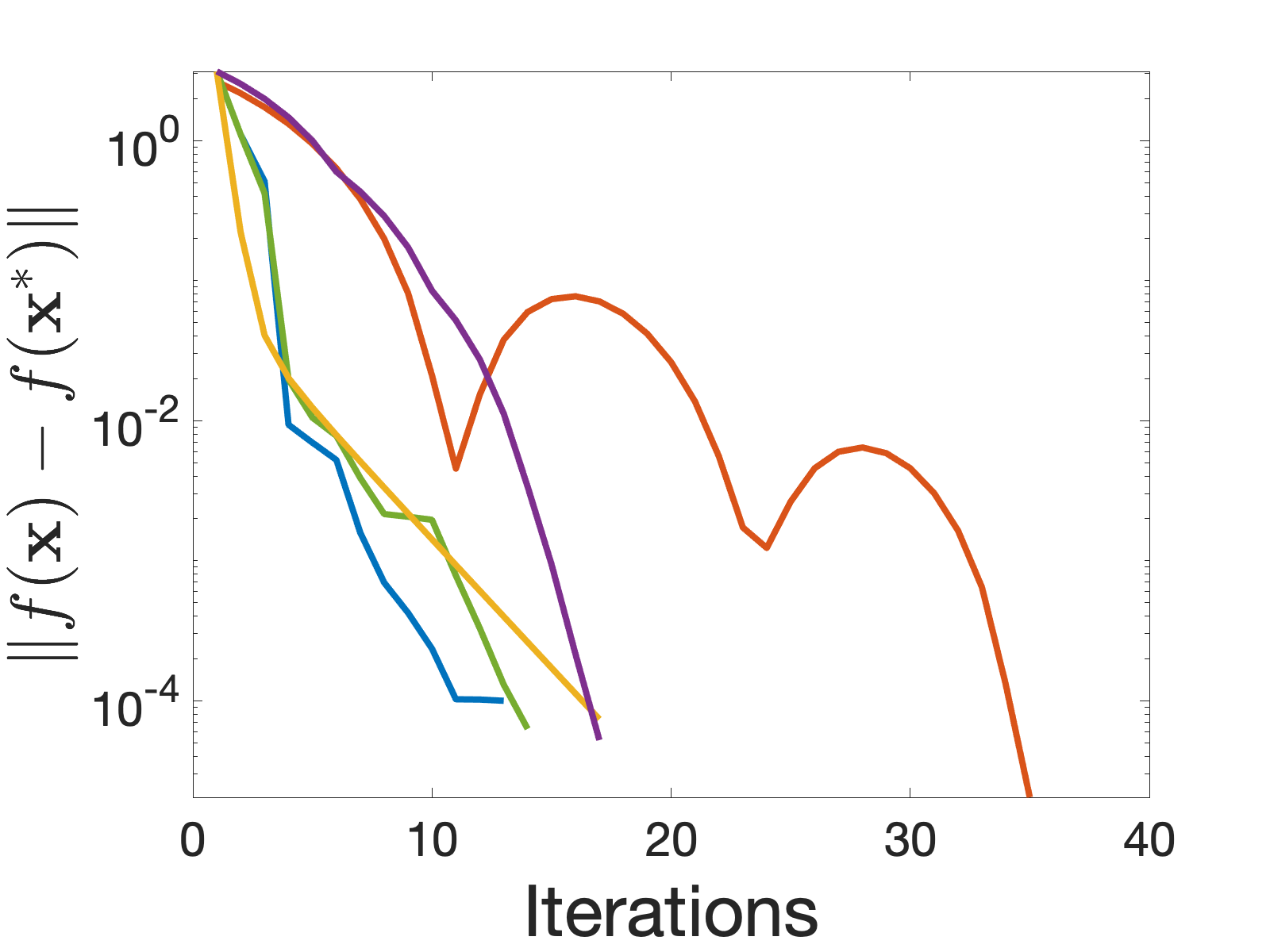}}

\subfigure[Rastrigin $\x(0)=(0.5, 0.5)$]{\label{fig:m}\includegraphics[width=0.3\textwidth]{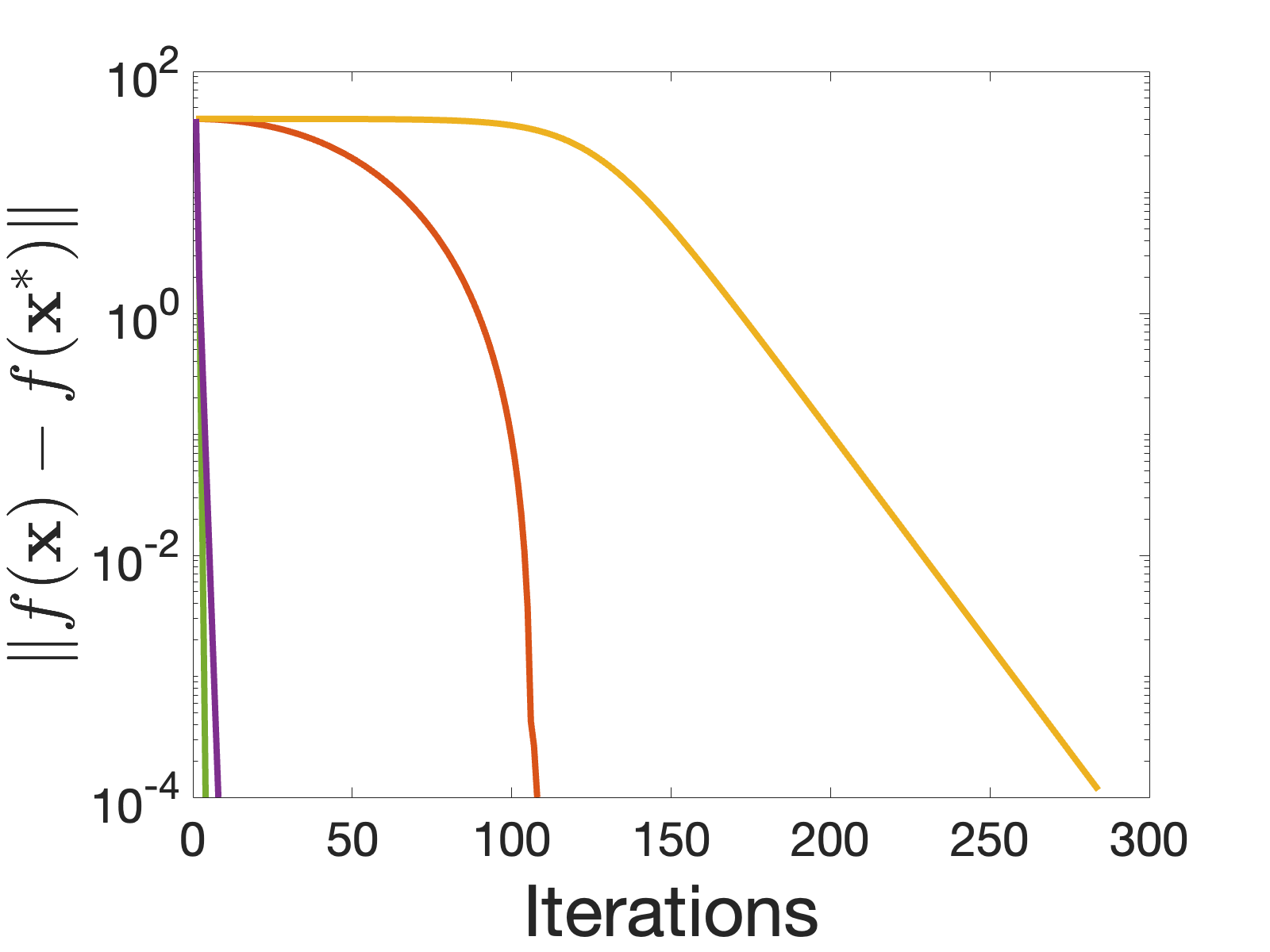}}
\subfigure[Extended Wood $\x(0)=2$]{\label{fig:n}\includegraphics[width=0.3\textwidth]{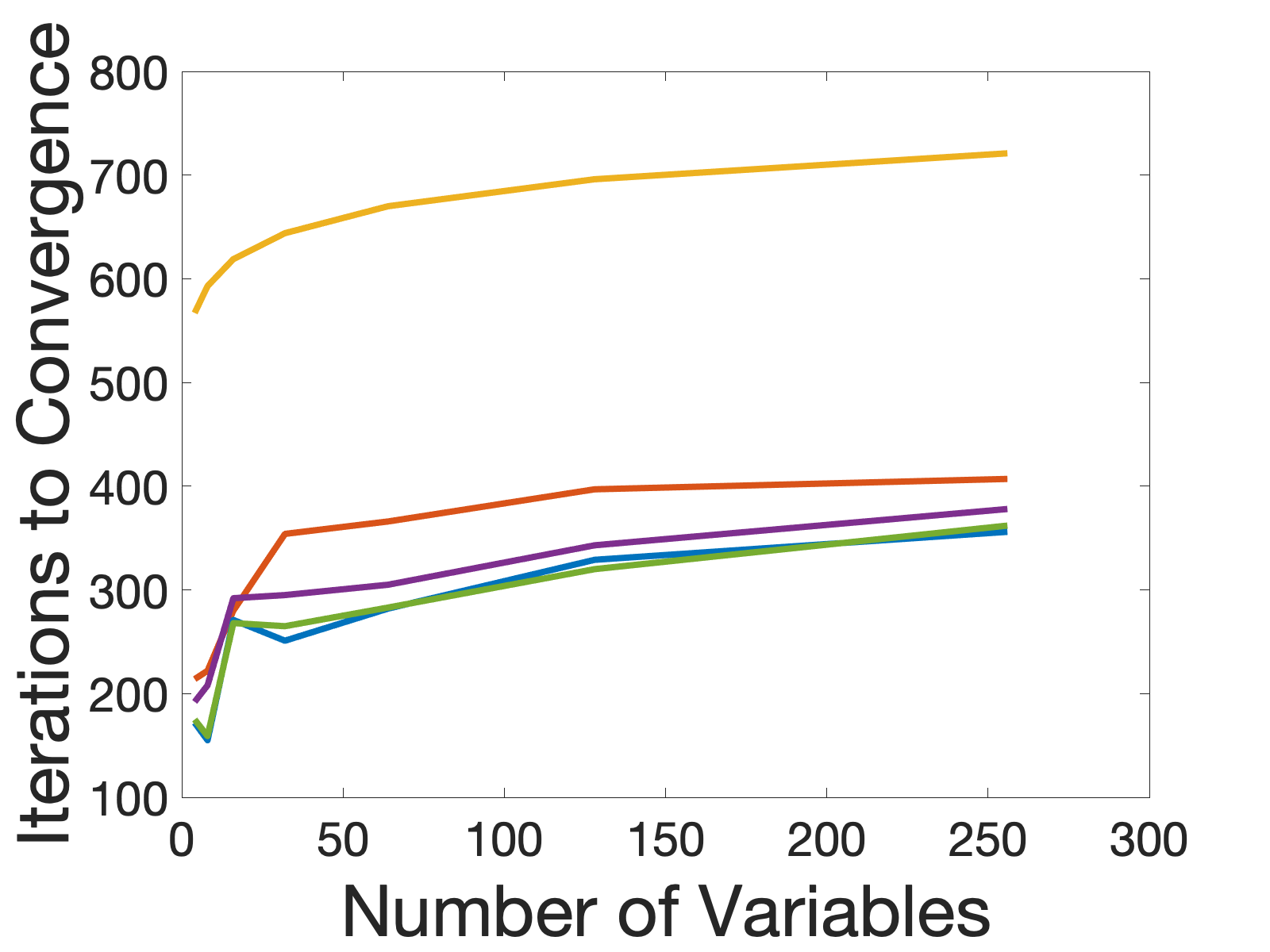}}
\subfigure[Extended Wood $\x(0)=10$]{\label{fig:o}\includegraphics[width=0.3\textwidth]{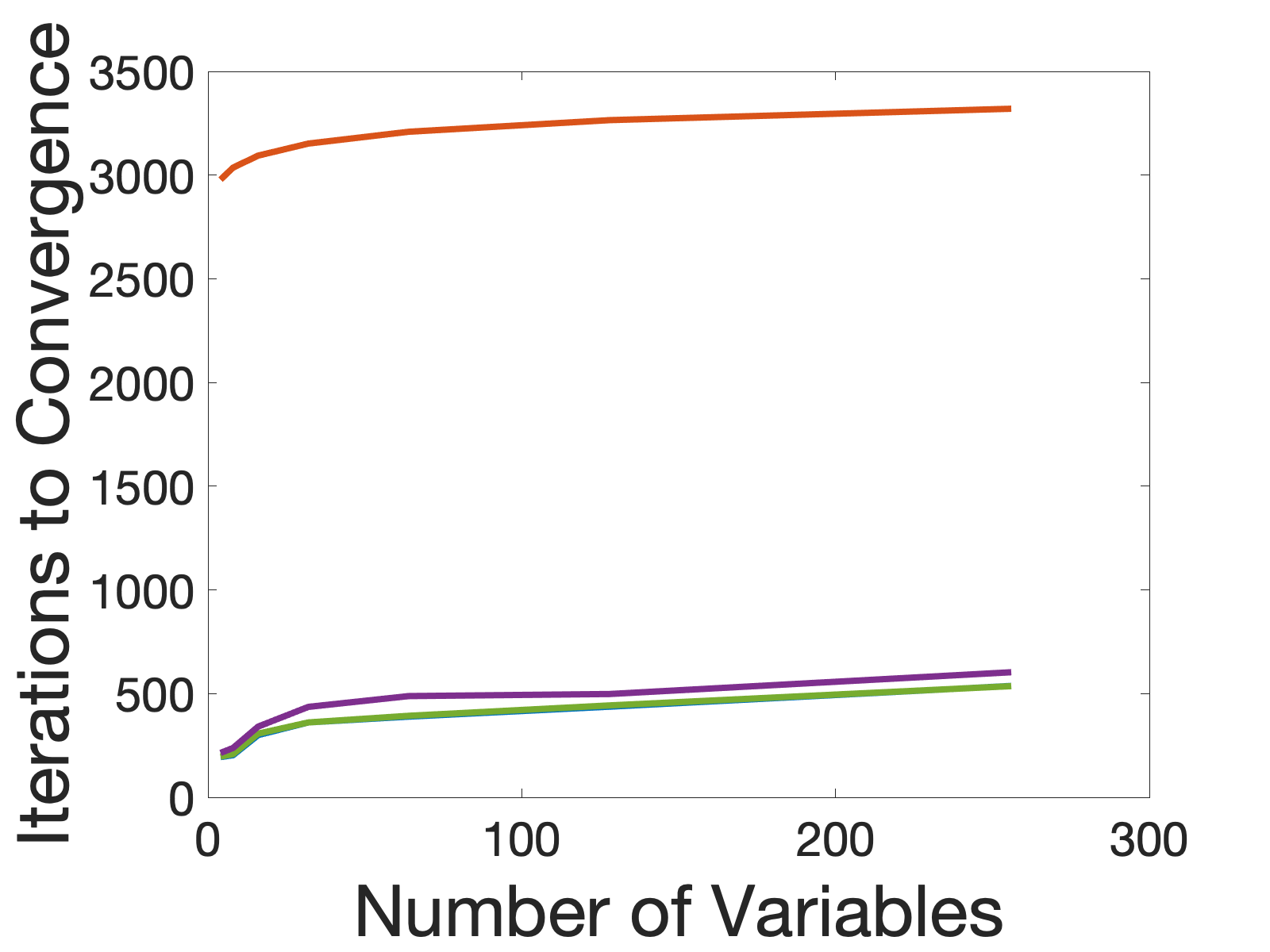}}

\caption{Test Functions}
\label{fig:test funcs}
\end{figure}

\subsection{Wall Clock Times}
See Table \ref{toy wall clock}.
\begin{table}[t]
\caption{\textbf{Test Functions Experiment}. Wall clock time comparison of tested optimization methods on test functions experiment, where each row is normalized to the ECCO full Hessian implementation. Note that this measures the wall clock time of each measure to convergence.}
\label{toy wall clock}
\vskip 0.15in
\begin{center}
\begin{small}
\begin{sc}
\begin{tabular}{lccccccr}
\toprule
Method  & ECCO \eqref{z true} & ECCO \eqref{z approx} &  GD+FE+Armijo & GD+FE+EATSS & Adam\\
\midrule
 Fig \ref{fig:a} & 1 & \textbf{0.982} & 1.68 & 3.15 &1.72\\
 Fig \ref{fig:b} & 1 & 1.24 & 1.31 & 1.31 & \textbf{0.91}  \\
 Fig \ref{fig:c} & 1 & \textbf{0.99} & 7.26 & 1.09 & 1.96  \\
 Fig \ref{fig:d} & \textbf{1} & 1.18 & 1.22 & 1.19 & 1.24  \\
 Fig \ref{fig:e} & 1 & \textbf{0.98} & 1.45  & 1.63 & 3.27 \\
 Fig \ref{fig:f} & 1 & 1.01 & 17.4 & \textbf{0.97} & 3.96  \\
 Fig \ref{fig:g} & \textbf{1} & 1.015 & 1.023 & 1.055 & 3.013 \\
 Fig \ref{fig:h} & \textbf{1} & 1.16 & 1.21 & 1.32  & 1.13 \\
 Fig \ref{fig:i} & \textbf{1} & 1.08 & 1.12 & 1.19 & 1.28  \\
 Fig \ref{fig:j} & \textbf{1} & 1.03 & 1.15 & 1.14 & 2.52 \\
 Fig \ref{fig:k} & 1 & \textbf{0.93} & 1.13 & 1.13 & 6.06  \\
 Fig \ref{fig:l} & \textbf{1} & 1.01 & 1.17 & 1.17 & 2.03  \\
 Fig \ref{fig:m} & 1 & \textbf{0.99} & 15.2 & 1.04 & 6.04  \\
 Fig \ref{fig:n} ($n=256$) & 1 & \textbf{0.92} & 1.16 & 2.44 & 1.15  \\
 Fig \ref{fig:o} ($n=256$) & 1 & \textbf{0.91} & DNC & 1.16 & 5.07  \\
\bottomrule
\end{tabular}
\end{sc}
\end{small}
\end{center}
\vskip -0.1in
\end{table}

\subsection{Backtracking Line Search Details} \label{backtrack}
Across the test functions, the initial time step \eqref{eq:lte} led to a sequence of time steps that satisfied the LTE and monotonicity checks in Algorithm \ref{time step algo} on the first or second try in 78\% of the iterations. This is to be expected as \eqref{eq:lte} is known from circuit theory to often satisfy the LTE criterion \cite{rohrer1981passivity}. We can thus observe that the time step search subroutine did not, in general, lead to high per-iteration complexity. 


\section{More Details on Neural Network Experiment} \label{more nn}
The neural network to classify the MNIST dataset (\cite{deng2012mnist}) used to train a neural network was modified from \cite{nnet_code}. The first layer has 50 nodes with a ReLU activation, the second had 20 nodes and a sigmoid activation, and the output layer had 10 nodes and a softmax activation.

\subsection{Hyperparameter Selection} \label{hyper search}
The search for finding optimal hyperparameters was accomplished by discretizing the parameter space of Adam, GD and RMSProp and performing grid search. For Adam, we searched for $\beta_1$ and $\beta_2$ within $[0.7, 1]$ in increments of $0.01$. For Adam, GD and RMSProp, we searched for an optimal learning rate within $[0.001,\dots, 1]$ in increments of $0.005$. For RMSProp, we searched for an optimal decay rate within $[0.1,\dots,1]$ in increments of $0.005$.

\subsection{Wall Clock Times}\label{nn wall clock}
See Table \ref{nn wall clock}. Note that computing \eqref{z approx} has the same per-iteration complexity as a gradient descent update. The experimental results show that the EATSS routine in Algorithm \ref{time step algo} did not substantially affect the wall clock time, i.e. by orders of magnitude.

\begin{table}[t]
\caption{\textbf{Neural Network Experiment}. Wall clock time comparison of tested optimization methods on neural network experiment, normalized to the ECCO full Hessian implementation.  Note that this is based on 200 iterations of training.}
\label{nn wall clock}
\vskip 0.15in
\begin{center}
\begin{small}
\begin{sc}
\begin{tabular}{lcccr}
\toprule
Method  & Mean Normalized Time Per Iteration \\
\midrule
ECCO \eqref{z approx}  &  1 \\
SGD                    &  0.93 \\
Adam                   &  0.96 \\
\bottomrule
\end{tabular}
\end{sc}
\end{small}
\end{center}
\vskip -0.1in
\end{table}

\subsection{More Robustness Experiments} \label{more robust}
We tested the robustness of the neural network to the hyperparameters of the selected optimization methods by perturbing the optimal hyperparameter values within a normalized ball of some radius and recording the accuracy of the trained neural network. The hyperparameter values were sampled from a uniform distribution as $\tilde\theta \sim U(\max\{{\theta}^*-\frac{\varepsilon}{{\theta}^*}, \underline{\theta}\},\min\{{\theta}^*+\frac{\varepsilon}{{\theta}^*},\bar{\theta} \}) $, where ${\theta}^*$ was the optimal value as found by grid search. Note that this perturbation was bounded to be within the domain $[\underline{\theta} ,\bar{\theta}]$ of the hyperparameter. 

In this section, we expand upon the results presented in the main paper by varying the radius for $\varepsilon\in\{0.01,1\}$ on ECCO and the comparison optimization solvers. For each fixed method and $\varepsilon$, 200 experiments were run, and the empirical classification accuracies are reported in Figures \ref{fig:supp_class_acc_vareps_0.01} and \ref{fig:supp_class_acc_vareps_1}. Note that the neural networks trained with ECCO are highly robust to the perturbations in the parameters; we found that when the neural network was trained with ECCO, we could perturb the optimal hyperparameters within a ball of $\varepsilon=1$ and lose at most 8\% in accuracy. 

ECCO's results are dramatically better than any of the comparison methods, where the accuracy suffered with small perturbations. Adam trained well for $\varepsilon=0.01$, but lost consistency and thus reliability for $\varepsilon=0.1$ and $\varepsilon=1$. Gradient descent and RMSProp similarly did not display robustness to the perturbations, implying that their methods need careful parameter tuning to be successful. 

\begin{figure}
    \centering
    \includegraphics[width=0.6\linewidth]{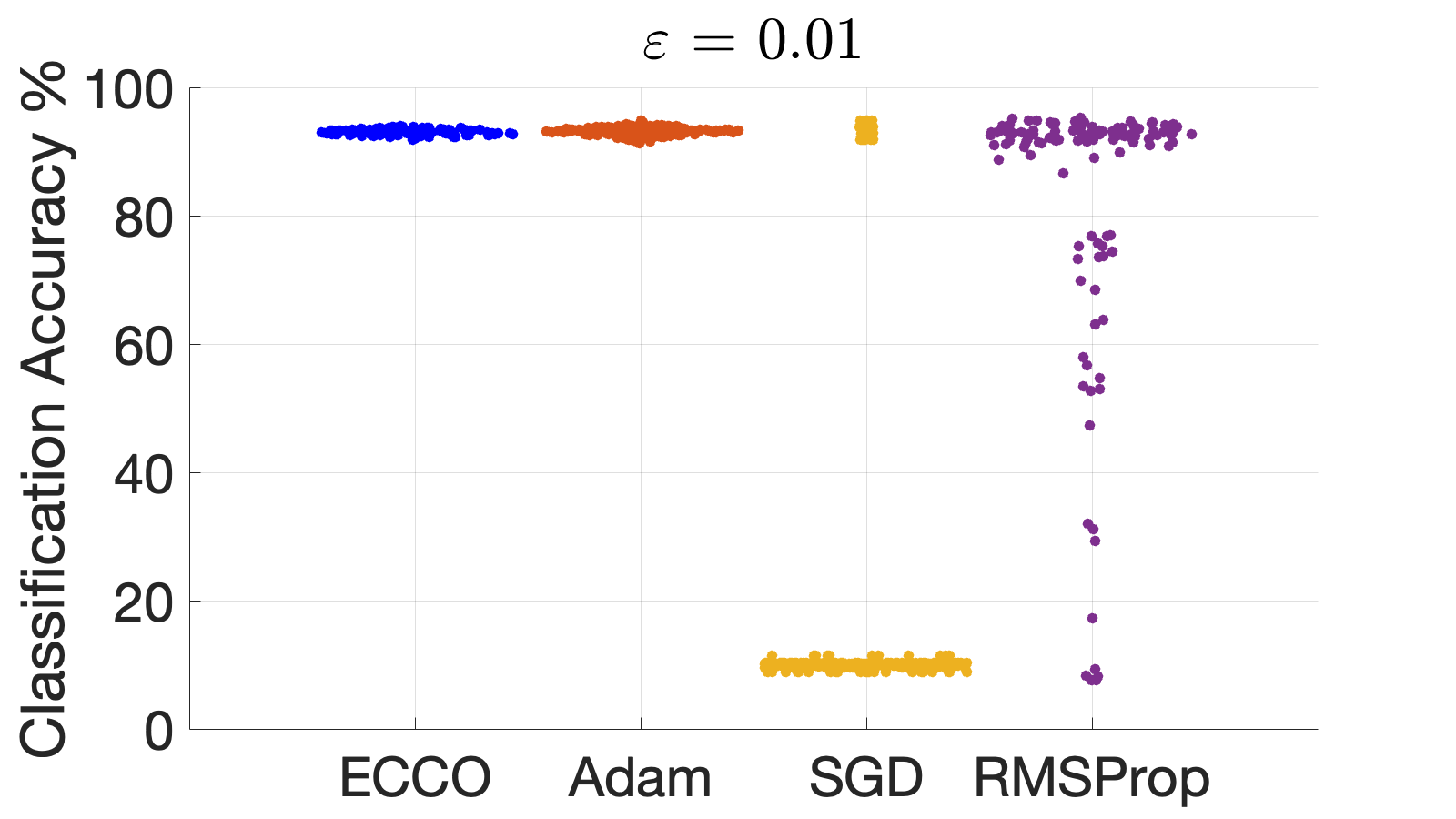}
    \caption{\small{Classification accuracy of 3-layer neural network using ECCO, Adam, gradient descent, and RMSProp with a random sampling of hyperparameters within a ball of $\varepsilon=0.01$ around the optimal hyper parameter values. Note that ECCO and Adam successfully train the neural network, while gradient descent and RMSProp are not able to consistently train in this regime.}}
    \label{fig:supp_class_acc_vareps_0.01}
\end{figure}

\begin{figure}
    \centering
    \includegraphics[width=0.6\linewidth]{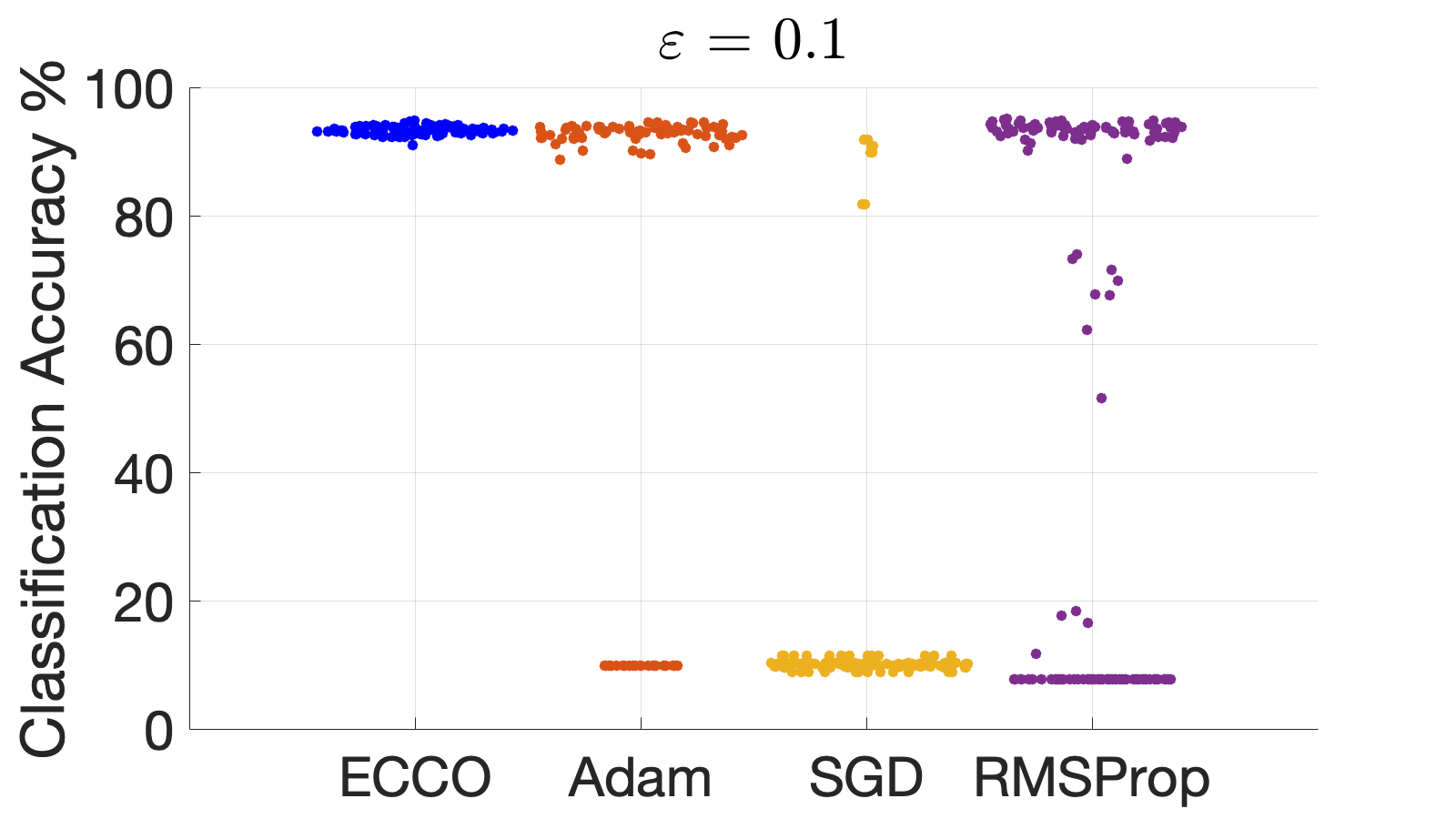}
    \caption{\small{Classification Accuracy of 3-layer neural network using ECCO, Adam, gradient descent, and RMSProp with a random sampling of hyperparameters within a ball of $\varepsilon=0.1$ around the optimal hyper parameter values. Note that ECCO is able to successfully train the neural network; however, none of the comparison methods are able to reliably train in this regime.}}
    \label{fig:supp_class_acc_vareps_1}
\end{figure}

In addition, perturbations to the comparison methods easily caused divergence. The percentage of experiments for which the optimization methods successfully converged to a fixed point is reported in Figure \ref{diverged table}. Note that ECCO was always able to converge due to the choice of discretization, but perturbations to the optimally tuned hyperparameters for Adam, gradient descent, and RMSProp easily caused those methods to diverge. If divergence occurred, the neural network was not trained, and it yielded an accuracy of about 10\% which was equivalent to random guessing. This behavior is easily visible by observing Figures \ref{fig:robust in paper}, \ref{fig:supp_class_acc_vareps_0.01}, and \ref{fig:supp_class_acc_vareps_1}.

\begin{table}[t]
\caption{\textbf{Convergence of Experiments.} Percent of experiments with perturbed hyperparameters for which the optimization methods converged to an approximate fixed point. Note that ECCO was always able to converge, but perturbing the comparison methods easily caused divergence.}
\label{diverged table}
\vskip 0.15in
\begin{center}
\begin{small}
\begin{sc}
\begin{tabular}{lcccccr}
\toprule
 & ECCO Approx \eqref{z approx} & Adam & SGD & RMSProp \\
\midrule
$\varepsilon = 0.01$   & 100\% & 100\% & 61.01\% & 80.78\%  \\
$\varepsilon = 0.1$    & 100\% & 77.2\% & 26.10\% & 61.01\%  \\
$\varepsilon = 1$     & 100\% &  65.4\%& 2.51\% & 53.09\%  \\
\bottomrule
\end{tabular}
\end{sc}
\end{small}
\end{center}
\vskip -0.1in
\end{table}

These experiments suggest that ECCO may be used for neural network training for similar performance to state-of-the-art methods like Adam, but without the extensive need for hyperparameter tuning. This suggests ECCO would need less data than Adam for cross-validation procedures and is apt for generalization and distribution shift.




\end{document}
